\newtheorem{remark}{Remark}
\newtheorem{theorem}{Theorem}
\newtheorem{lemma}{Lemma}
\newtheorem{assumption}{Assumption}
\newtheorem{definition}{Definition}
\newtheorem{corollary}{Corollary}
\newcommand{\col}{{\rm col\;}}
\newcommand{\row}{{\rm row\;}}
\newcommand{\longthmtitle}[1]{\mbox{} \emph{(#1):}}
\def\qed{ \rule{.1in}{.1in}}
\def\BibTeX{{\rm B\kern-.05em{\sc i\kern-.025em b}\kern-.08em
		T\kern-.1667em\lower.7ex\hbox{E}\kern-.125emX}}
\begin{document}

\title{Resilience for Distributed Consensus with Constraints}
\author{Xuan Wang, Shaoshuai Mou, and Shreyas Sundaram \vspace{-1.5em}
	\thanks{This work is supported by the NASA University Leadership Initiative (ULI) under grant number 80NSSC20M0161 and the National Science Foundation CAREER award 1653648. X. Wang is with the Department of Electrical and Computer Engineering, George Mason University, Fairfax, VA 22030 USA, {xwang64@gmu.edu}. S. Mou is with the School of Aeronautics and Astronautics, Purdue University, West Lafayette, IN 47906 USA, {mous@purdue.edu}.  S. Sundaram is with the Elmore Family School of Electrical and Computer Engineering, Purdue University, West Lafayette, IN 47906 USA, {sundara2@purdue.edu}}}
\maketitle

\begin{abstract}                          
	This paper proposes a new approach that  enables multi-agent systems to achieve resilient \textit{constrained} consensus in the presence of Byzantine attacks, 
	in contrast to existing literature that is only applicable to \textit{unconstrained} resilient consensus problems. 
	The key enabler for our approach is a new device called a \textit{$(\gamma_i,\alpha_i)$-resilient convex combination}, which allows normal agents in the network to utilize their locally available information to automatically isolate the impact of the Byzantine agents.
	Such a resilient convex combination is computable through linear programming, whose complexity scales well with the size of the overall system. By applying this new device to multi-agent systems, we introduce network and constraint redundancy conditions under which resilient constrained consensus can be achieved with an exponential convergence rate. 
	We also provide insights on the design of a network such that the redundancy conditions are satisfied.
	Finally, numerical simulations and an example of safe multi-agent learning are provided to demonstrate the effectiveness of the proposed results.
\end{abstract}
	
\begin{IEEEkeywords}
	Resilient distributed coordination; Multi-agent consensus     
\end{IEEEkeywords}

\section{Introduction}
The emergence of \textit{multi-agent} systems is motivated by modern applications that require more efficient data collection and processing capabilities in order to achieve a higher-level of autonomy\cite{FJS09DC}. For multi-agent systems, the
main challenge lies in the requirement of a scalable control algorithm that allows all agents in the system to interact in a cooperative manner. Towards this end, traditional approaches equipped with a centralized coordinator can suffer from poor performance due to limitations on the coordinator's computation and communication capabilities. 
Motivated by this, significant research effort has been devoted to developing \textit{distributed} control architectures\cite{WR08book,GN17TCNS,ZSL16Auto}, which enable multi-agent systems to achieve global objectives through only local coordination. 
A specific example is \textit{consensus}, where each agent controls a local state which it repeatedly updates as a weighted average of its neighbors' states \cite{MAB08TAC,MAB08SIAM,XMT16Auto,QSY17TAC}. In this way, the states of all agents will gradually approach the same value \cite{PSJW19ARC,TXJ19ARC,XPY16TAC}, allowing the system to achieve certain global control or optimization goals. Recently, the consensus approach has powered many applications, including motion synchronization \cite{MM10DC}, multi-robot path planning/formation control \cite{XMT15TCNS}, flocking of mobile robots \cite{CH19CSL},  and cooperative sensing \cite{GBU17TAC,XJSM19TAC} across the overall network.

While consensus algorithms are simple and powerful, their effectiveness heavily depends on the assumption that all agents in the network correctly execute the algorithm. 
In practice, however, the presence of large numbers of heterogeneous agents in multi-agent systems (provided by different vendors, with differing levels of trust) introduces the potential for some of the agents to be malicious. These agents may inject misinformation that propagates throughout the network and prevent the entire system from achieving proper consensus-based coordination. As shown in \cite{Shreyas18TAC}, even one misbehaving agent in the network can easily lead the states of other agents to any value it desires.  Traditional information-security approaches \cite{MH11PIS} (built on the pillars of confidentiality, integrity, and availability) focus on protecting the data itself, but are not directly applicable to consensus-based distributed algorithms for cooperative decision making, as such processes are much more sophisticated and vulnerable than simple data transmission. Because of this, instead of narrowly focusing on techniques such as data encryption and identity verification \cite{MH11PIS}, researchers are motivated to seek fundamentally new approaches that allow the network to reliably perform consensus even after the adversary has successfully compromised certain agents in the system. Towards this end, the main challenges arise from the special characteristics of distributed systems where no agent has access to global information. As a consequence, classical paradigms such as fault-tolerant control \cite{MC97PIEE} are usually not sufficient to capture the sophisticated actions (i.e., Byzantine failures\footnote{Byzantine failures are considered the most general and most difficult class of failures among the failure modes. They imply no restrictions, which means that the failed agent can generate arbitrary data, including data that makes it appear like a functioning (normal) agent.}) the adversarial agents take to avoid detection.

\textit{Literature review:} Motivated by the need to address security for consensus-based distributed algorithms, recent research efforts have aimed to establish more advanced approaches that are inherently resilient to Byzantine attacks, i.e., allowing the normal agents to utilize only their local information to automatically mitigate or isolate misinformation from attackers.
Along this direction, relevant literature dates back to the results in \cite{M83ACM,G84ACM}, which reveal that a resilient consensus can be achieved among $m$ agents with at most $(m-1)/3$ Byzantine attackers. 
However, these techniques require significant computational effort by the agents (and global knowledge of the topology).
To address this, the Mean Subsequence Reduced (MSR) algorithms were developed \cite{Shreyas13Selected,SH17Auto}, based on the idea of letting each normal agent run distributed consensus by excluding the most extreme values from its neighbor sets at each iteration. Under certain conditions on the network topology, the effectiveness of these algorithms can be theoretically validated. Followed by \cite{Shreyas13Selected,SH17Auto}, similar approaches have been further applied to resilient distributed optimization \cite{KLS20Arxiv,JWM20SAS} and multi-agent machine learning\cite{PRJ17NIPS}. 
Note that in the results of \cite{Shreyas13Selected,SH17Auto}, the local states for all agents must be scalars. If the agents control multi-dimensional states, one possible solution is to run the scalar consensus separately on each entry of the state. However, such a scheme will violate the convex combination property of distributed consensus (i.e., the agents' states in the new time-step must be located within the convex hull of its neighbors' states at the current time-step.) and thus, lead to a loss of the spatial information encoded therein. 
In order to maintain such convex combination property among agents' states, methods based on Tverberg points have been proposed for both centralized\cite{mendes2015DC} and distributed cases\cite{N14ICDCN,PH15IROS,PH16ICRA,PH17TOR}. However, according to \cite{WD13DCG,PME08ACM}, the computational complexity of calculating exact Tverberg points is typically high. To reduce the computational complexity, new approaches based on \textit{resilient convex combinations}\cite{LN14ACM,HMNV15DC,XSS18NACO,JYXC20IFAC} have been developed, which can be computed via linear or quadratic programming with lower complexity. 

It is worth pointing out that the results mentioned above are applicable to \textit{unconstrained} resilient consensus problems. To the best of the authors' knowledge, there is no existing literature that expends these results to constrained consensus problems where the consensus value must remain within a certain set.
Such problems play a significant role in the field of distributed control and optimization\cite{AAP10TAC,SJA15TAC,XSB19TAC}. This motivates the goal of this paper: developing a fully distributed algorithm that can achieve resilient constrained consensus for multi-agent systems with exponential convergence rate. The key challenges in such settings are as follows. (\textbf{a}) To satisfy local constraints, each agent must update its local state by a resilient convex combination of its neighbors' states, precluding the use of entry-wise consensus. (\textbf{b}) The proposed approach should guarantee an exponential convergence rate. 
(\textbf{c}) For constrained consensus problems, to isolate the impact of malicious information from Byzantine agents, in addition to the network redundancy, one also needs constraint redundancy.  

\textit{Statement of Contributions}: We study the resilient consensus algorithm for multi-agent systems with local constraints. Originated from our previous preliminary work in \cite{XSS18NACO}, we introduce a new device named \textit{$(\gamma_i,\alpha_i)$-Resilient Convex Combination}. Given a maximum number of Byzantine neighbors for each agent, we introduce an approach that allows the multi-agent system to achieve resilient constrained consensus with an exponential convergence rate.  We provide conditions on network and constraint redundancies under which the effectiveness of the proposed algorithm can be guaranteed. 
We observe that the proposed network redundancy condition is difficult to verify directly. Thus, we introduce a sufficient condition, which is directly verifiable and offers an easy way to design network typologies satisfying the required condition. 
All the results in the paper are validated with examples in un-constrained consensus, constrained consensus, and safe multi-agent learning.	
Our result differs from existing literature in two major aspects. Firstly, our approach has an exponential convergence rate, which is not guaranteed in our previous work \cite{XSS18NACO}. The related work in \cite{N14ICDCN} first introduced a special way of using Tverberg points for resilient consensus and achieved an exponential convergence rate. 
Despite the similar properties shared by our approach and the one in \cite{N14ICDCN}, the key mechanism of our approach is different from using Tverberg points. Furthermore, as discussed in Fig. 1 of \cite{XSS18NACO}, when a Tverberg point does not exist, a \textit{Resilient Convex Combination} always exists.
Secondly, compared with \cite{XSS18NACO,N14ICDCN,LN14ACM,HMNV15DC,JYXC20IFAC,PH15IROS,PH16ICRA,PH17TOR} that are designed only for unconstrained consensus, our new approach is capable of achieving resilient constrained consensus. The involvement of constraints requires the introduction of extra redundancy conditions to guarantee convergence.
It is worth mentioning that the redundancy conditions we derived to handle constraints are not restricted to the \textit{$(\gamma_i,\alpha_i)$-Resilient Convex Combination} proposed in this paper, but can also be integrated into the results in \cite{XSS18NACO,N14ICDCN,LN14ACM,HMNV15DC,JYXC20IFAC,PH15IROS,PH16ICRA,PH17TOR} by applying the same redundancy conditions accordingly.

The remaining parts of the paper are organized as follows. Section \ref{Sec_Prob} presents the mathematical formulation of the distributed resilient constrained-consensus problem. In Section  \ref{SecNRCC}, we introduce a new device named $(\gamma_i,\alpha_i)$-Resilient Convex Combination, and propose an algorithm that allows the normal agents in the network to obtain such a resilient convex combination only based on its local information. Section \ref{Sec_CTC} presents how the $(\gamma_i,\alpha_i)$-Resilient Convex Combination can be applied to solve resilient constrained consensus problems, and specifies under which conditions (network and constraint redundancies) the effectiveness of the proposed algorithm can be theoretically guaranteed.
Numerical simulations and an example of safe multi-agent learning are given in Section \ref{Sec_SM} to validate the results. We conclude the paper in Section \ref{Sec_CL}.

\textit{Notations:} Let $\mathbb{Z}$ denote the set of integers and $\mathbb{R}$ denote the set of real numbers. Let $x_{\mathcal{A}}=\{x_j\in \mathbb{R}^n~,~j\in \mathcal{A}\}$ denote a set of states in $\mathbb{R}^n$, where $\mathcal{A}$ is a set of agents. 
Given $x_1, x_2, \cdots, x_r \in \mathbb{R}^{n}$, let $\col\{x_j|~j=1,2,\cdots,r\}=\begin{bmatrix}
	x_1^{\top}&x_2^{\top}&\cdots &x_r^{\top}
\end{bmatrix}^{\top}\in\mathbb{R}^{rn}$ denote a column stack of vectors;
let $\row\{x_j|~j=1,2,\cdots,r\}=\begin{bmatrix}
	x_1&x_2&\cdots &x_r
\end{bmatrix}\in\mathbb{R}^{n\times r}$ denote a row stacked matrix of vectors. Let $\mathscr{V}(\mathbb{G})$ denote the agent (vertex) set of a graph $\mathbb{G}$.
Let ${\bf 1}_r$ denote a vector in $\mathbb{R}^r$ with all its components equal to 1. Let $I_r$ denote the $r\times r$ identity matrix. 
For a vector $\beta$, by $\beta> 0$ and $\beta\geq 0$, we mean that each entry of vector $\beta$ is positive and non-negative, respectively.

\section{Problem formulation}\label{Sec_Prob}
Consider a network of ${m}$ agents in which each agent $i$ is able to receive information from certain other agents, called agent $i$'s neighbors. Let $\mathcal{N}_i(t)$, $i\in\{1,\cdots,m\}$ denote the set of agent $i$'s neighbors at time $t$. The neighbor relations can be characterized by a time-dependent graph $\mathbb{G}(t)$ such that there is a directed edge from $j$ to $i$ in $\mathbb{G}(t)$ if and only if $j\in \mathcal{N}_i(t)$. For all $i$, $t$, we assume $i\in{\mathcal{N}}_i(t)$. Let $m_i(t)=|{\mathcal{N}}_i(t)|$, which is the cardinality of the neighbor set.
Based on network $\mathbb{G}(t)$, in a consensus-based distributed algorithm, each agent is associated with a local state $x_i\in\mathbb{R}^n$; and the goal is to reach a consensus for all agents, such that
\begin{align}\label{Csed_Csus}
	x_1=&\cdots=x_m=x^*,
\end{align}
where $x^*$ is a common value of interest.

\subsection{Consensus-based Distributed Algorithms}\label{subsec_CBDA}
In order to achieve the $x^*$ in \eqref{Csed_Csus}, consensus-based distributed algorithms usually require each agent to update its state by an equation of the form
\begin{align}
	x_i(t+1)=f_i(v_i(t),x_i(t)),  \label{eq_mainalgorithm}
\end{align}
where $f_i(\cdot)$ is an operator associated with the problem being tackled by the agents. For example, it may arise from constraints (as a projection operator), objective functions  (as a gradient descent operator), and so on. The quantity $v_i(t)$ in \eqref{eq_mainalgorithm} is a convex combination of agent $i$'s neighbors' states, i.e.,
\begin{align}\label{eq_oconvexcomb}
	v_i(t)=\sum\limits_{j\in\mathcal{N}_i(t)} w_{ij}(t)x_j(t).
\end{align}
Here, $w_{ij}(t)\ge0$ is a weight that agent $i$ places on its neighboring agent $j\in\mathcal{N}_i(t)$, with $\sum_{j\in\mathcal{N}_i(t)} w_{ij}(t)=1$ for all $i\in\{1,\cdots,m\}$ and $t=1,2,3,\cdots$. Note that different choices of $w_{ij}(t)$ can lead to different $v_i(t)$, which can influence the convergence of the algorithm. 
One feasible choice of $w_{ij}(t)$ can be obtained by letting $ w_{ij}(t)=\frac{1}{|\mathcal{N}_i(t)|}$, which uniformly shares the weights among the neighbors \cite{SJA15TAC}.

With the $v_i(t)$ in \eqref{eq_oconvexcomb}, we further specify the form of update \eqref{eq_mainalgorithm} for solving constrained consensus problems. Suppose the agents of the network are subject to local convex constraints $x_i\in\mathcal{X}_i$, $i\in\{1,\cdots,m\}$, with $\bigcap_{i=1}^m\mathcal{X}_i\neq \emptyset$. To obtain a consensus value in \eqref{Csed_Csus} such that 
\begin{align} %
	x^*&\in\bigcap_{i=1}^m \mathcal{X}_i,\label{Csed_Csus2}
\end{align}
update \eqref{eq_mainalgorithm} can take the form
\begin{align}
	x_i(t+1)
	=\mathcal{P}_i(v_i(t)),  \label{eq_algCC}
\end{align}
where $\mathcal{P}_i(\cdot)$ is a projection operator that projects any state to the local constraint $\mathcal{X}_i$ \cite{SJA15TAC}. 
Note that if the local constraint does not exist, i.e. $\mathcal{X}_i=\mathbb{R}^n,~ \forall i\in\{1,\cdots,m\}$, then $\mathcal{P}_i(v_i(t))=v_i(t)$ is an identity-mapping.

\subsection{Distributed Consensus Under Attack}\label{subsec_Byzantine}
Suppose the network $\mathbb{G}(t)$ is attacked by a number of Byzantine agents, which are able to connect themselves to the normal agents in $\mathbb{G}(t)$ in a time-varying manner, and can send arbitrary state information to those connected agent(s). Note that if one Byzantine agent is simultaneously connected to multiple normal agents, it does not necessarily have to send the same state information to all of those agents.
We use ${\mathbb{G}}^+(t)$ to characterize the new graph where both normal and Byzantine agents are involved. Accordingly, for normal agents $i=1,\cdots,m$, we use ${\mathcal{M}}_i(t)$ to denote the set of its Byzantine neighbors.
For all $t=1,2,3,\cdots$,  we assume the number of each agent's Byzantine neighbors is  $|{\mathcal{M}}_i(t)|= {\kappa}_i(t)$. Further for all $t=1,2,\cdots$ and $i\in\{1,\cdots,m\}$, we assume ${\kappa}_i(t)\le\bar{\kappa}\in\mathbb{Z}_+$ is bounded. Note that the normal agents do not know which, if any, of their neighbors are Byzantine.

Under network ${\mathbb{G}}^+(t)$, if one directly applies the constrained consensus algorithm, the malicious states sent by Byzantine agents will be injected into the convex combination \eqref{eq_oconvexcomb} as
\begin{align}\label{eq_mconvexcomb}
	{v}_i(t)
	=&\sum\limits_{j\in{\mathcal{N}}_i(t)} {w}_{ij}(t)x_j(t)+\sum\limits_{k\in{\mathcal{M}}_i(t)} {w}_{ik}(t)x_{ik}(t),
\end{align}
where $x_{ik}(t)$ is the state information sent by Byzantine agent $k$ to the normal agent $i$,
${w}_{ij}(t),{w}_{ik}(t)\ge0$, $i\in\{1,\cdots,m\}$, $j\in\mathcal{N}_i(t)$, $k\in{\mathcal{M}}_i(t)$, and 
$\sum_{j\in{\mathcal{N}}_i(t)} {w}_{ij}(t)+\sum_{k\in{\mathcal{M}}_i(t)} {w}_{ik}(t)=1.$
According to equation \eqref{eq_mconvexcomb}, since $x_{ik}(t)$ can take arbitrary values, as long as  ${w}_{ik}(t)$ are not all equal to zero, the Byzantine agents can fully manipulate  ${v}_i(t)$ to any desired value and therefore prevent the normal agents from reaching a consensus.

\subsection{The Problem: Resilient Constrained Consensus}
In this paper, \textit{the problem of interest} is to develop a novel algorithm such that each normal agent in the system can mitigate the impact of its (unknown) Byzantine neighbors when calculating its local consensus combination; we call this a resilient convex combination.
Then by employing such a resilient convex combination into update \eqref{eq_algCC}, all agents in the system are able to achieve resilient \textit{constrained} consensus with \textit{exponential} convergence rate even in the presence of Byzantine attacks.

\section{An Algorithm to Achieve the $(\gamma_i,\alpha_i)$-Resilient Convex Combination} \label{SecNRCC}

Since the convex combination ${v}_i(t)$ in \eqref{eq_mconvexcomb} is the key reason why Byzantine agents can compromise a consensus-based distributed algorithm,  it motivates us to propose a distributed approach for the normal agents in the network to achieve the following objectives. \textbf{(i)} The approach should allow the normal agents to choose a ${v}_i^\star(t)$ subject to \eqref{eq_mconvexcomb}, such that ${w}_{ik}^\star(t)=0$, for $\forall k\in\mathcal{M}_i(t)$. This guarantees that all information from Byzantine agents are automatically removed from ${v}_i^\star(t)$. 
\textbf{(ii)} The obtained ${v}_i^\star(t)$ should guarantee that, for normal neighbors of agent $i$, $j\in{\mathcal{N}}_i(t)$, a certain number of ${w}_{ij}^\star(t)$ is lower bounded by a positive constant.
This guarantees that a consensus among normal agents can be reached with a proper convergence rate.


\subsection{Resilient Convex Combination} \label{SecNRCC_KI}
In line with the two objectives proposed above, for each normal agent $i$, we define the following concept, named a \textit{$ (\gamma_i,\alpha_i)$-resilient convex combination}:

\begin{definition}\label{def_Res} \textnormal{\textbf{($(\gamma_i,\alpha_i)$-Resilient Convex Combination)}}
	A vector ${v}_i^\star(t)$ computed by agent $i\in\{1,\cdots,m\}$ is called a \textit{resilient convex combination} of its neighbor's states if it is a convex combination of its \textit{normal} neighbors' states, i.e., there exists ${w}_{ij}^\star(t)\in[0,1]$, $j\in{{\mathcal{N}}}_i(t)$, such that
	\begin{align}\label{eq_reconvexcomb}
		{v}_i^\star(t)=&\sum\limits_{j\in{{\mathcal{N}}}_i(t)} {w}_{ij}^\star(t)x_j(t)
	\end{align}
	with $\sum_{j\in{\mathcal{N}}_i(t)} {w}_{ij}^\star(t)=1.$
	Furthermore, for $\gamma_i\in\mathbb{N}$ and $\alpha_i\in(0,1)$, the resilient convex combination is called a $(\gamma_i,\alpha_i)$-\textit{resilient}, if at least $\gamma_i$ of ${w}_{ij}^\star(t)$ satisfy ${w}_{ij}^\star(t)\ge\alpha_i$.
\end{definition}

Moving forward, let ${\mathcal{N}}^+_i(t)\triangleq\mathcal{N}_i(t)\bigcup \mathcal{M}_i(t)$ be the neighbor set of $i$, which is composed of both normal and malicious agents.

\begin{assumption} \label{ass1}
	Suppose for $\forall t=1,2,3,\cdots$, and $\forall i\in\{1,\cdots,m\}$, the neighbor sets satisfy $|{\mathcal{N}}^+_i(t)|\ge (n+1){\kappa}_i(t) + 2$, where $n$ is the dimension of the state and $\kappa_i(t)$ is the number of the Byzantine neighbors of agent $i$. 
\end{assumption}

Note that this assumption will be used later in the theorems to guarantee the existence of the resilient convex combination. In addition, if  $\kappa_i(t)$ is unknown, one can replace the $\kappa_i(t)$ by an upper bound of $\kappa_i(t)$ for all $\forall i,t$, denoted by $\bar{\kappa}$.
To continue, define the following concept,

\begin{definition} [Convex hull]
	Given a set of agents $\mathcal{A}$, and the corresponding state set $x_{\mathcal{A}}$, the convex hull of vectors in $x_{\mathcal{A}}$ is defined as
	\begin{align}\label{def_CH}
		\mathcal{H}(x_{\mathcal{A}})=\left\{\sum_{j=1}^{|x_\mathcal{A}|}\omega_{j}x_j:x_j\in x_\mathcal{A}, \omega_{j}\geq 0, \sum_{j=1}^{|x_\mathcal{A}|}\omega_{j}=1\right\}.
	\end{align}
\end{definition}

\subsection{An Algorithm for Resilient Convex Combination} \label{SecProof}
The above definition allows us to present Algorithm \ref{Algorithm_Main} for agent $i$ to achieve a $(\gamma_i,\alpha_i)$-resilient convex combination, which is only based on the agent's locally available information. Note that since all the results in this section only consider the states for a particular time step $t$,  for the convenience of presentation, we omit the time-step notation $(t)$ in the following algorithm description.
\begin{algorithm2e}
	\label{Algorithm_Main}
	\caption{Compute a $(\gamma_i,\alpha_i)$-Resilient Convex Combination ${v}_i^\star$ for agent $i$.}
	\SetAlgoLined
	\textbf{Input}  ${\mathcal{N}}^+_i$, $x_{{\mathcal{N}}^+_i}$, and ${\kappa}_i$.\\
	\textit{Let} ${d}_i=|{\mathcal{N}}^+_i|$.\\
	\textit{Choose} an integer  $\sigma_i\in\left[n{\kappa}_i + 2\ ,\  {d}_i-{\kappa}_i\right]$   \tcp*{Note that ${d}_i\ge (n+1){\kappa}_i + 2$, so the set is non-empty.}
	\textit{Let} $s_i=\binom {{d}_i-1} {\sigma_i-1}$. {Construct} all possible subsets $\mathcal{S}_{ij}\subset {\mathcal{N}}^+_i$, $j\in\{1,\cdots,s_i\}$, such that $i\in\mathcal{S}_{ij}$, and $|\mathcal{S}_{ij}|=\sigma_i$ \tcp*{An illustration of sets $\mathcal{S}_{ij}$ is given in Fig. \ref{Fig_Alg_1}.} \label{st_con}
	
	\For{$j= 1,\cdots,s_i$\label{st_start}}
	{
		\textit{Define} $\widehat{\mathcal{S}}_{ij}\triangleq{\mathcal{S}}_{ij}\setminus \{i\}$.\\
		\eIf{$x_i\in\mathcal{H}(x_{\widehat{\mathcal{S}}_{ij}})$}
		{\textit{Let} \quad $\phi_{ij}=x_{i}$.\label{st_a}}
		{\textit{Let} $b_i=\binom {\sigma_i-1} {\sigma_i-{\kappa}_i-1}=\binom {\sigma_i-1} {{\kappa}_i}$. For $k\in\{1, \cdots, b_i\}$, construct sets $\mathcal{B}_{ijk}\subset \mathcal{S}_{ij}$ such that  $i \in \mathcal{B}_{ijk}$,  and $|\mathcal{B}_{ijk}|=\sigma_i-{\kappa}_i$.\\
			\textit{Let} \quad $\displaystyle \phi_{ij}\in\left(\bigcap_{k=1}^{b_i}\mathcal{H}(x_{\mathcal{B}_{ijk}})\right)\bigcap
			\mathcal{H}(x_{\widehat{\mathcal{S}}_{ij}})$ \label{st_b}
			\tcp*{ Later, we will show that the set defined here is non-empty.}} 
	}\label{st_end}
	\textit{Compute} \quad $\displaystyle {v}_i^\star =\frac{\sum_{j=1}^{s_i}\phi_{ij}+x_i}{s_i+1}$.
\end{algorithm2e}
\begin{theorem}\longthmtitle{Validity of Algorithm \ref{Algorithm_Main}}\label{Th_1}
	Suppose Assumption \ref{ass1} holds. For agent $i$, given an integer $\sigma_i\in\left[n{\kappa}_i + 2\ ,\  {d}_i-{\kappa}_i\right]$, by  Algorithm \ref{Algorithm_Main}, the obtained ${v}_i^\star$ is a $(\gamma_i,\alpha_i)$-resilient convex combination, where $\gamma_i=d_i-\kappa_i-\sigma_i+2$ and $\alpha_i= \frac{1}{(s_i+1)(n+1)}$. Here, $d_i=|{\mathcal{N}}^+_i|$ and $s_i=\binom {{d}_i-1} {\sigma_i-1}$.
\end{theorem}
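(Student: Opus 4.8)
The plan is to verify the three requirements of Definition~\ref{def_Res} in turn: that Algorithm~\ref{Algorithm_Main} is well defined (in particular that the set in Step~\ref{st_b} is non-empty), that $v_i^\star$ is a convex combination of the states of agent $i$'s \emph{normal} neighbours only, and that the resulting weights satisfy the quantitative $(\gamma_i,\alpha_i)$ bounds. I treat $t$ as fixed and suppress it. Two elementary observations organize everything. First, since $i\in\mathcal{B}_{ijk}$ for every $k$, each hull $\mathcal{H}(x_{\mathcal{B}_{ijk}})$ contains $x_i$, and since $\widehat{\mathcal S}_{ij}=\mathcal S_{ij}\setminus\{i\}$, in both branches one has $\phi_{ij}\in\mathcal{H}(x_{\widehat{\mathcal S}_{ij}})$. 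Second, because the total number of Byzantine neighbours is $\kappa_i$, any $\sigma_i$-element set $\mathcal S_{ij}$ contains at most $\kappa_i$ Byzantine agents, so deleting a suitable set of $\kappa_i$ elements other than $i$ leaves an all-normal set; hence at least one $\mathcal{B}_{ijk^\star}\subseteq\mathcal N_i\cup\{i\}$ consists entirely of normal agents.

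The main geometric step is the non-emptiness of the set in Step~\ref{st_b}, and I would prove it with Helly's theorem in $\mathbb{R}^n$ applied to the finite family $\mathcal F=\{\mathcal H(x_{\mathcal B_{ij1}}),\dots,\mathcal H(x_{\mathcal B_{ijb_i}}),\,\mathcal H(x_{\widehat{\mathcal S}_{ij}})\}$ of compact convex polytopes. It suffices to exhibit a common point for every subfamily of at most $n+1$ members. If all chosen members are $\mathcal B$-hulls, they share $x_i$. Otherwise exactly one member is $\mathcal H(x_{\widehat{\mathcal S}_{ij}})$ together with at most $n$ of the $\mathcal B$-hulls; each $\mathcal B_{ijk}$ is obtained from $\mathcal S_{ij}$ by deleting $\kappa_i$ elements, so $n$ of them delete at most $n\kappa_i$ elements of $\widehat{\mathcal S}_{ij}$. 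Since $|\widehat{\mathcal S}_{ij}|=\sigma_i-1\ge n\kappa_i+1$ by the lower end of the admissible range for $\sigma_i$, some vertex $x_{p^\star}$ with $p^\star\in\widehat{\mathcal S}_{ij}$ survives all $n$ deletions; then $x_{p^\star}$ lies in each of those $\mathcal B$-hulls and in $\mathcal H(x_{\widehat{\mathcal S}_{ij}})$, giving the required common point. Helly then yields a point in the full intersection, so Step~\ref{st_b} is executable. This is precisely where Assumption~\ref{ass1}, through $\sigma_i\ge n\kappa_i+2$, is used.

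With the algorithm well defined, the normality property is immediate: in the \emph{if} branch $\phi_{ij}=x_i$ is a normal state, while in the \emph{else} branch $\phi_{ij}\in\bigcap_k\mathcal H(x_{\mathcal B_{ijk}})\subseteq\mathcal H(x_{\mathcal B_{ijk^\star}})$, a hull of normal states by the second observation; hence every $\phi_{ij}$ is a convex combination of normal neighbours' states, and so is the average $v_i^\star=\frac{1}{s_i+1}\big(x_i+\sum_{j}\phi_{ij}\big)$. This shows $w_{ik}^\star=0$ for all $k\in\mathcal M_i$ and writes $v_i^\star$ in the form \eqref{eq_reconvexcomb}. For the magnitude $\alpha_i$, I would invoke Carath\'eodory's theorem inside the relevant all-normal hull: each $\phi_{ij}$ is a convex combination of at most $n+1$ normal states, so at least one normal agent receives weight at least $\tfrac{1}{n+1}$ in $\phi_{ij}$; after dividing by $s_i+1$ in the final average, and since weight contributions are non-negative and additive, any agent charged this way in some $\phi_{ij}$ accumulates total weight at least $\tfrac{1}{(s_i+1)(n+1)}=\alpha_i$ in $v_i^\star$.

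The remaining and hardest step is the count $\gamma_i=d_i-\kappa_i-\sigma_i+2$ of agents whose weight reaches $\alpha_i$. Agent $i$ always qualifies, since the explicit $x_i$ term already gives it weight at least $\tfrac{1}{s_i+1}\ge\alpha_i$. For the rest, the plan is to track, across all $s_i=\binom{d_i-1}{\sigma_i-1}$ subsets, the distinct normal agents that are charged and to lower-bound the size of their union. The obstacle is that a naive per-subset Carath\'eodory bound only guarantees \emph{one} charged agent per $\phi_{ij}$ and these may coincide, so the count must exploit the combinatorial structure of the family $\{\widehat{\mathcal S}_{ij}\}$ together with a deliberate selection of each $\phi_{ij}$ that spreads weight over its support rather than collapsing onto a single vertex; one then argues that at most $\sigma_i-2$ normal agents, in addition to the $\kappa_i$ Byzantine agents, can fail to accumulate weight $\alpha_i$. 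Making this counting robust to the \emph{if}-branch short-circuits, which divert weight onto $x_i$, and to the freedom left in choosing $\phi_{ij}$, is the crux of the argument and is where I expect the bulk of the technical work to lie.
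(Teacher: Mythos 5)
Your proposal correctly establishes the existence of $\phi_{ij}$ and the resilient-combination property, but it does not prove the theorem: the count $\gamma_i=d_i-\kappa_i-\sigma_i+2$ is precisely the part you leave open, and the direction you sketch for it is not the one that works. No ``deliberate selection of $\phi_{ij}$ that spreads weight over its support'' is needed, and the worry about if-branch short-circuits diverting weight onto $x_i$ is a red herring. The mechanism the paper uses has two ingredients your sketch is missing. First, Carath\'eodory must be applied in $\mathcal{H}(x_{\widehat{\mathcal{S}}_{ij}})$ (the hull that \emph{excludes} $i$), and only for those indices $j$ whose $\mathcal{S}_{ij}$ contains no Byzantine agent at all: for such $j$ the representation over $\widehat{\mathcal{S}}_{ij}$ simultaneously avoids $i$ and consists only of normal agents, so it is a legitimate resilient representation charging some normal $\ell^*\neq i$ with weight at least $\frac{1}{n+1}$, whence $w_{i\ell^*}\ge\alpha_i$. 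Your phrasing ``inside the relevant all-normal hull'' is ambiguous in a way that matters: in the all-normal hull $\mathcal{H}(x_{\mathcal{B}_{ijk^\star}})$ the charged agent could be $i$ itself (useless for the count), while in $\mathcal{H}(x_{\widehat{\mathcal{S}}_{ij}})$ for an arbitrary $j$ the charged agent could be Byzantine, in which case that representation cannot be used in the resilient decomposition of $v_i^\star$ at all. Note also that this handles the if-branch seamlessly, since there $\phi_{ij}=x_i\in\mathcal{H}(x_{\widehat{\mathcal{S}}_{ij}})$ still admits such a representation with no weight on $i$. Second, distinctness of the charged agents comes from a greedy elimination over the family $\Gamma_i(0)$ of Byzantine-free subsets $\mathcal{S}_{ij}\subset\mathcal{N}_i$, which has $\binom{d_i-\kappa_i-1}{\sigma_i-1}$ members: charge one agent $\ell^*$ from an arbitrary member, delete every member containing $\ell^*$, and repeat. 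After $k$ rounds the family has $\binom{d_i-\kappa_i-1-k}{\sigma_i-1}$ members, so the process runs exactly $d_i-\kappa_i-\sigma_i+1$ times, each round producing a \emph{new} normal agent ($\neq i$) with weight at least $\alpha_i$; together with $w_{ii}\ge\frac{1}{s_i+1}$ this yields $\gamma_i$. Without this (or an equivalent) argument, your proposal only shows $v_i^\star$ is a resilient convex combination with a couple of large weights, not a $(\gamma_i,\alpha_i)$-resilient one.

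On the parts you did complete: your Helly-based proof that the set in Step \ref{st_b} is non-empty is correct and is genuinely different from, and shorter than, the paper's route, which first proves $\bigl(\bigcap_{k=1}^{b_i}\mathcal{H}(x_{\mathcal{B}_{ijk}})\bigr)\setminus\{x_i\}\neq\emptyset$ via a counting lemma, Radon's theorem, and induction, and then transfers into $\mathcal{H}(x_{\widehat{\mathcal{S}}_{ij}})$ with a vertex/supporting-hyperplane argument. Your verification of the Helly condition --- any $n$ of the $\mathcal{B}$-hulls delete at most $n\kappa_i$ elements of $\widehat{\mathcal{S}}_{ij}$, which has $\sigma_i-1\ge n\kappa_i+1$ elements, so a common vertex $x_{p^\star}$ with $p^\star\neq i$ survives --- is exactly where $\sigma_i\ge n\kappa_i+2$ enters, and it is sound. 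Likewise your normality argument via an all-normal $\mathcal{B}_{ijk^\star}$ matches Lemma \ref{LM_ERW}\textbf{b}. But these pieces correspond only to Lemma \ref{LM_ERW}\textbf{a} and \textbf{b}; the quantitative content of Theorem \ref{Th_1} rests on the counting argument you deferred.
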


\begin{remark}\label{R_Th1_2}
	Note that for a certain time step $t$, if agent $i$ has $d_i$ neighbors, then by running Algorithm \ref{Algorithm_Main}, the obtained  ${v}_i^\star$ is a convex combination of at least $\gamma_i=d_i-\kappa_i-\sigma_i+2$ normal neighbors. 
	This means in order to completely isolate the information from $\kappa_i$ malicious agents, the agent $i$ may simultaneously lose the information from $\sigma_i-2$ of its normal neighbors. \hfill$\square$
\end{remark}

\smallskip

To better understand Algorithm \ref{Algorithm_Main} before proving Theorem \ref{Th_1}, we provide Fig. \ref{Fig_Alg_1} to visualize its execution process. In a 2 dimensional ($n=2$) space, consider an agent $i=1$, which has $d_i=6$ neighbors, denoted by agents $1-6$. Suppose $\kappa_i=1$, and let $\sigma_i=5\in\left[n{\kappa}_i + 2\ ,\  {d}_i-{\kappa}_i\right]$. Then for step \ref{st_con} of Algorithm \ref{Algorithm_Main}, one can construct $s_i=\binom{5}{4}=5$ sets $\mathcal{S}_{ij}$. For example, $\mathcal{S}_{11}=\{1,2,3,4,5\}$, $\mathcal{S}_{12}=\{1,2,3,4,6\}$, and so forth. Consider $j=1$ in step \ref{st_start}. Then $\widehat{\mathcal{S}}_{11}={\mathcal{S}}_{11}\setminus\{i\}=\{2,3,4,5\}$, and accordingly, the set $\mathcal{H}(x_{\widehat{\mathcal{S}}_{11}})$ must be the yellow area in the top middle figure of Fig. \ref{Fig_Alg_1}. Since in the given example, one has $x_i\notin \mathcal{H}(x_{\widehat{\mathcal{S}}_{11}})$, then by running step 10, we have $b_1=\binom {\sigma_i-1} {{\kappa}_i}=\binom {4} {1}=4$, and
there holds $|\mathcal{B}_{11k}|=\sigma_i-\kappa_i=4$, for $k\in\{1,2,3,4\}$. Specially, $\mathcal{B}_{111}=\{1,2,3,4\}$,  $\mathcal{B}_{112}=\{1,3,4,5\}$, $\mathcal{B}_{113}=\{1,2,4,5\}$, and $\mathcal{B}_{114}=\{1,2,3,5\}$. Consequently, $\bigcap_{k=1}^{b_i}\mathcal{H}(x_{\mathcal{B}_{ijk}})$ will be the green area in the top middle figure of Fig. \ref{Fig_Alg_1}. The intersection $\left(\bigcap_{k=1}^{b_i}\mathcal{H}(x_{\mathcal{B}_{ijk}})\right)\bigcap
\mathcal{H}(x_{\widehat{\mathcal{S}}_{ij}})$ is non-empty. One can pick any point in this intersection to be $\phi_{ij}$. With the obtained $\phi_{ij}$, the output of Algorithm \ref{Algorithm_Main} can be computed as ${v}_i^\star=\frac{\sum_{j=1}^{s_i}\phi_{ij}+x_i(t)}{s_i+1}$.
\noindent\begin{figure}[t]
	\vspace{-0.1cm}
	\centering
	\includegraphics[width=8 cm]{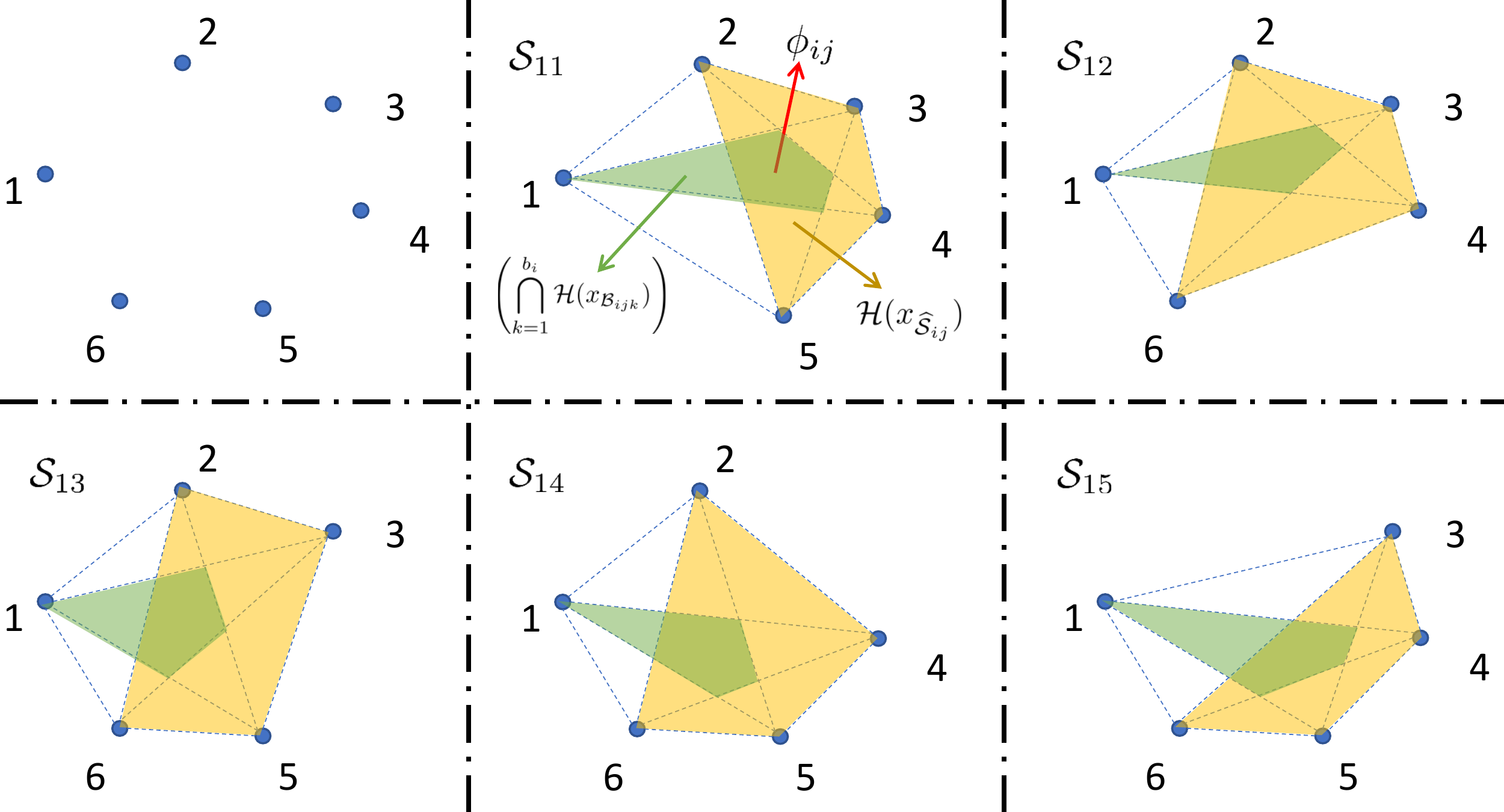}
	\caption{Visualization of Algorithm \ref{Algorithm_Main}.}
	\label{Fig_Alg_1}
\end{figure}

As a further remark, each iteration of Algorithm \ref{Algorithm_Main} (from step \ref{st_start} to step \ref{st_end}) aims to find a vector  $\phi_{ij}\in\left(\bigcap_{k=1}^{b_i}\mathcal{H}(x_{\mathcal{B}_{ijk}})\right)\bigcap
\mathcal{H}(x_{\widehat{\mathcal{S}}_{ij}})$. As will be shown in the following subsection, the set $\left(\bigcap_{k=1}^{b_i}\mathcal{H}(x_{\mathcal{B}_{ijk}})\right)$ guarantees that the chosen $\phi_{ij}$ is a resilient convex combination defined in definition \ref{def_Res}, and the set $\mathcal{H}(x_{\widehat{\mathcal{S}}_{ij}})$  guarantees certain bounds on the values of $\gamma_i$ and $\alpha_i$.

{\subsection{Proof of Theorem \ref{Th_1}} \label{SecAnalysis}}
Ahead of proving Theorem \ref{Th_1}, we first present a lemma to summarize some useful results with proofs given in the Appendix.

\begin{lemma}\label{LM_ERW}
	Algorithm \ref{Algorithm_Main} has the following properties:
	\begin{enumerate}[label=\textbf{\alph*}.]	
	\item $\textit{[Existence]}$ For the step 11 of Algorithm \ref{Algorithm_Main}, 
	\begin{align}
		\left(\bigcap_{k=1}^{b_i}\mathcal{H}(x_{\mathcal{B}_{ijk}})\right)\bigcap
		\mathcal{H}(x_{\widehat{\mathcal{S}}_{ij}})\neq\emptyset.
	\end{align} 
	
	\item $\textit{[Resilient combination]}$  For all $j\in\{1,2,\cdots,s_i\}$, the vector $\phi_{ij}$ is a resilient convex combination such that 
	\begin{align}\label{RCCphi}
		{\phi}_{ij}=\sum_{\ell\in{\widetilde{\mathcal{S}}_{ij}}}{\xi}_{ij\ell}x_\ell,\quad \sum_{\ell\in{\widetilde{\mathcal{S}}_{ij}}}{\xi}_{ij\ell}=1,\quad \xi_{ij\ell}\ge 0,
	\end{align}
	where $\widetilde{\mathcal{S}}_{ij}\triangleq {\mathcal{S}}_{ij}\setminus \mathcal{M}_i$. Note that $\widetilde{\mathcal{S}}_{ij}$ precludes any Byzantine agents from the neighbor set ${\mathcal{S}}_{ij}$. Particularly, if ${\mathcal{S}}_{ij}\bigcap \mathcal{M}_i=\emptyset$, then $\widetilde{\mathcal{S}}_{ij}={\mathcal{S}}_{ij}$.

	\item $\textit{[Weighted combination]}$  For all $ j\in\{1,2,\cdots,s_i\}$, the vector ${\phi}_{ij}$ can be written as 
	\begin{align}\label{eq_weight}
		{\phi}_{ij}=\sum_{\ell\in{\widehat{\mathcal{S}}_{ij}}}{\beta}_{ij\ell}x_\ell,\quad \sum_{\ell\in{\widehat{\mathcal{S}}_{ij}}}{\beta}_{ij\ell}=1,\quad {\beta}_{ij\ell}\ge 0,
	\end{align}
	such that there exists at least one $\ell^*\in\widehat{\mathcal{S}}_{ij}$ with ${\beta}_{ij\ell^*}\ge\frac{1}{n+1}.$ Here, as defined in Algorithm \ref{Algorithm_Main},  $\widehat{\mathcal{S}}_{ij}\triangleq{\mathcal{S}}_{ij}\setminus \{i\}$.
	\end{enumerate}	
\end{lemma}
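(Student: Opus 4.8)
The plan is to establish the three properties with two classical results from convex geometry, Helly's theorem for part (a) and Carath\'eodory's theorem for part (c), together with a combinatorial argument tailored to the set-construction in Algorithm \ref{Algorithm_Main}.

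For part (a), I would apply Helly's theorem to the finite family of convex sets $\{\mathcal{H}(x_{\mathcal{B}_{ij1}}),\ldots,\mathcal{H}(x_{\mathcal{B}_{ijb_i}}),\mathcal{H}(x_{\widehat{\mathcal{S}}_{ij}})\}$ in $\mathbb{R}^n$, which reduces the claim to checking that every subfamily of $n+1$ of these sets has a common point. The structural fact that drives this is that $i\in\mathcal{B}_{ijk}$ for every $k$, so $x_i$ is a vertex of each $\mathcal{H}(x_{\mathcal{B}_{ijk}})$ and hence lies in all $b_i$ of them; this disposes of every $(n+1)$-subfamily drawn entirely from the $\mathcal{B}$-hulls. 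The remaining case is a subfamily consisting of $\mathcal{H}(x_{\widehat{\mathcal{S}}_{ij}})$ together with $n$ of the $\mathcal{B}$-hulls, say indexed by $k_1,\ldots,k_n$: here I would note that each $\mathcal{B}_{ijk_l}$ is obtained from $\widehat{\mathcal{S}}_{ij}$ by deleting exactly $\kappa_i$ of its $\sigma_i-1$ elements (while keeping $i$), so the $n$ deletions remove at most $n\kappa_i$ indices in total. Since the admissible range forces $\sigma_i\ge n\kappa_i+2$, we have $|\widehat{\mathcal{S}}_{ij}|=\sigma_i-1\ge n\kappa_i+1$, so $\bigcap_{l=1}^n(\mathcal{B}_{ijk_l}\setminus\{i\})$ is non-empty; any index $\ell^*$ in it yields a vertex $x_{\ell^*}$ common to all $n$ of the $\mathcal{B}$-hulls and to $\mathcal{H}(x_{\widehat{\mathcal{S}}_{ij}})$. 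Helly's theorem then delivers the non-empty full intersection. I expect this counting step, and recognizing that $\sigma_i\ge n\kappa_i+2$ is exactly the inequality that makes it go through, to be the main obstacle.

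For part (b), I would treat the two branches defined by the test $x_i\in\mathcal{H}(x_{\widehat{\mathcal{S}}_{ij}})$ separately. If $x_i\in\mathcal{H}(x_{\widehat{\mathcal{S}}_{ij}})$ then $\phi_{ij}=x_i$, and since agent $i$ is normal we have $i\in\widetilde{\mathcal{S}}_{ij}$, so $\phi_{ij}$ is trivially the convex combination in \eqref{RCCphi} with unit weight on $i$. In the else-branch the plan is to single out one constructed set $\mathcal{B}_{ijk^*}$ that contains no Byzantine index: because $|\mathcal{S}_{ij}\cap\mathcal{M}_i|\le\kappa_i$ and each $\mathcal{B}_{ijk}$ deletes exactly $\kappa_i$ elements of $\widehat{\mathcal{S}}_{ij}$, one may delete all Byzantine indices present (padding with normal ones if fewer than $\kappa_i$ are present), and since Algorithm \ref{Algorithm_Main} enumerates all such subsets this $\mathcal{B}_{ijk^*}$ is among them. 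Then $\mathcal{B}_{ijk^*}\subseteq\mathcal{S}_{ij}\setminus\mathcal{M}_i=\widetilde{\mathcal{S}}_{ij}$, and from $\phi_{ij}\in\bigcap_k\mathcal{H}(x_{\mathcal{B}_{ijk}})\subseteq\mathcal{H}(x_{\mathcal{B}_{ijk^*}})\subseteq\mathcal{H}(x_{\widetilde{\mathcal{S}}_{ij}})$ the representation \eqref{RCCphi} follows at once.

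For part (c), I would use that in both branches $\phi_{ij}\in\mathcal{H}(x_{\widehat{\mathcal{S}}_{ij}})$, so Carath\'eodory's theorem in $\mathbb{R}^n$ lets me write $\phi_{ij}$ as a convex combination of at most $n+1$ of the points $\{x_\ell:\ell\in\widehat{\mathcal{S}}_{ij}\}$, with the remaining weights set to zero. Since at most $n+1$ nonnegative weights sum to one, a pigeonhole/averaging argument forces at least one weight $\beta_{ij\ell^*}$ to be at least $\tfrac{1}{n+1}$, which establishes \eqref{eq_weight}.
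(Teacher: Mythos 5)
Your parts (b) and (c) are correct and essentially identical to the paper's own arguments: the same case split on the test $x_i\in\mathcal{H}(x_{\widehat{\mathcal{S}}_{ij}})$ together with the existence of a Byzantine-free $\mathcal{B}_{ijk^\star}$ (guaranteed because $|\mathcal{S}_{ij}\cap\mathcal{M}_i|\le\kappa_i$ and the algorithm enumerates all subsets of size $\sigma_i-\kappa_i$ containing $i$) for (b), and Carath\'eodory's theorem plus pigeonhole for (c). Part (a), however, takes a genuinely different and considerably shorter route than the paper. The paper proceeds in three stages: a counting lemma (its Lemma \ref{L_PI}) showing that any $n$ of the hulls $\mathcal{H}(x_{\mathcal{B}_{ijk}})$ share a point other than $x_i$; an induction over the number of hulls, powered by Radon's theorem, extending this to all $b_i$ hulls while still avoiding $x_i$ (its Lemma \ref{L_Bij}); and finally a supporting-hyperplane/vertex argument showing that the polytope $\bigcap_{k=1}^{b_i}\mathcal{H}(x_{\mathcal{B}_{ijk}})$ has a vertex besides $x_i$, and that such a vertex must lie on faces spanned by $x_{\widehat{\mathcal{S}}_{ij}}$, hence inside $\mathcal{H}(x_{\widehat{\mathcal{S}}_{ij}})$. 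You instead apply Helly's theorem to the augmented family consisting of all $b_i$ hulls together with $\mathcal{H}(x_{\widehat{\mathcal{S}}_{ij}})$, and your two-case check of the $(n+1)$-wise intersections is sound: subfamilies made only of $\mathcal{B}$-hulls share $x_i$, and for $\mathcal{H}(x_{\widehat{\mathcal{S}}_{ij}})$ plus $n$ of the $\mathcal{B}$-hulls, your deletion count (each $\mathcal{B}_{ijk}\setminus\{i\}$ omits exactly $\kappa_i$ of the $\sigma_i-1\ge n\kappa_i+1$ indices of $\widehat{\mathcal{S}}_{ij}$, so $n$ of them cannot omit everything) produces a common index $\ell^*$; this is precisely where $\sigma_i\ge n\kappa_i+2$ enters, playing the same role it plays in the paper's counting lemma. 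Your route avoids both the Radon-based induction and the polytope-vertex argument, and it even proves the claim without invoking the step-11 hypothesis $x_i\notin\mathcal{H}(x_{\widehat{\mathcal{S}}_{ij}})$, on which the paper's proof leans. The only loose end is the degenerate case where the family has fewer than $n+1$ members, i.e.\ $b_i<n$ (this happens only when $\kappa_i=0$, giving $b_i=1$): Helly's theorem as usually stated requires at least $n+1$ sets, but there the claim is immediate since $\mathcal{H}(x_{\widehat{\mathcal{S}}_{ij}})\subseteq\mathcal{H}(x_{\mathcal{S}_{ij}})$, or alternatively your counting argument applies verbatim to the whole family; this is a one-line patch, not a gap.
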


\smallskip
\begin{remark}\label{Rm_2}
	Note that Lemma \ref{LM_ERW}\textbf{a} establishes the existence of the vector $\phi_{ij}$. 
	Then \textbf{b} and \textbf{c} represent this same vector using different combinations of $x_{\ell}$ to characterize the different properties of ${\phi}_{ij}$. In equation \eqref{RCCphi}, the set $\widetilde{\mathcal{S}}_{ij}$ includes agent $i$ and precludes malicious agents. This set aims to obtain a resilient convex combination. 
	In contrast, in equation \eqref{eq_weight}, the set ${\widehat{\mathcal{S}}_{ij}}$ precludes agent $i$ but may potentially involve Byzantine agents. This set aims to obtain a convex combination with certain weights that are bounded away from zero for some $\ell\neq i$. 
	Specifically, if ${\widehat{\mathcal{S}}_{ij}}\subset\widetilde{\mathcal{S}}_{ij}$, then the $\xi_{ij\ell}$ in \eqref{RCCphi} can correspondingly be  replaced by the $\beta_{ij\ell}$ in \eqref{eq_weight}, and leaving the remaining weights for ${\widetilde{\mathcal{S}}_{ij} \setminus \widehat{\mathcal{S}}_{ij}}$ to be 0.
	It is worth mentioning that the first intersection of convex hulls in Lemma \ref{LM_ERW}\textbf{a} can be considered as generalizations of the Lemma 3 in \cite{HMNV15DC}. More specifically, \cite{HMNV15DC} does not use each agent's own state (i.e., $x_i$) to construct a resilient convex combination. In contrast, in this paper, we relax the feasible region of $\bigcap_{k=1}^{b_i}\mathcal{H}(x_{\mathcal{B}_{ijk}})$ by taking advantage of the fact that normal agents' can always trust their own states. \hfill$\square$
\end{remark}

\noindent{\bf Proof of Theorem \ref{Th_1}:}\\
Recall  ${v}_i^\star=\frac{1}{s_i+1}\left(\sum_{j=1}^{s_i}\phi_{ij}+x_i\right)$ and $\forall j\in\{1,2,\cdots,s_i\}$, $\widetilde{\mathcal{S}}_{ij}\subset \mathcal{N}_i$, where $\mathcal{N}_i$ is the set of normal neighbors of agent $i$. Then based on Lemma  \ref{LM_ERW}\textbf{b}, one has
\begin{align}\label{the_eq2}
	{v}_i^\star=\sum_{\ell\in{\mathcal{N}}_i}w_{i\ell}x_\ell,\quad \sum_{\ell\in{\mathcal{N}}_i}w_{i\ell}=1, \quad w_{i\ell}\ge 0.
\end{align}
Equation \eqref{the_eq2} means ${v}_i^\star$ is a resilient convex combination, and particularly, for the weights $w_{i\ell}$, one has $w_{ii}\ge\frac{1}{s_i+1}>\frac{1}{(s_i+1)(n+1)}$. For $l\neq i$, from \eqref{RCCphi}, one has
\begin{align}\label{wilbijl}
	w_{i\ell}=\frac{1}{s_i+1}\sum_{ j\in\{1,2,\cdots,s_i\}}{\xi}_{ij\ell},
\end{align} 
where for $\ell\in\widetilde{\mathcal{S}}_{ij}$, ${\xi}_{ij\ell}$ is the weight defined in \eqref{RCCphi}, and for $\ell\notin\widetilde{\mathcal{S}}_{ij}$, ${\xi}_{ij\ell}=0$. To complete the proof of Theorem \ref{Th_1}, we need to further show that for $\ell\neq i$, 
at least $\gamma_i=d_i-\kappa_i-\sigma_i+1$ of weights $w_{i\ell}$ are no less than $\alpha_i= \frac{1}{(s_i+1)(n+1)}$.  
Recall that $\xi_{ij\ell}\ge 0$, and thus $\forall j\in\{1,\cdots,s_i\}, \forall\ell\in{\mathcal{N}}_i, $, 
one has 
\begin{align}\label{ieq_wbeta}
	w_{i\ell}\ge\frac{1}{s_i+1}\xi_{ij\ell}.
\end{align}
The key idea is to find the lower bound of $w_{i\ell}$ by determining the lower bound of the corresponding $\xi_{ij\ell}$. 
To proceed, define a set $\Gamma_i(0)\triangleq\left\{\mathcal{S}_{ij}~\big|~\mathcal{S}_{ij}\subset\mathcal{N}_i~ \text{and} ~j\in\{1,\cdots,s_i\}\right\}$. Note that in Algorithm \ref{Algorithm_Main}, $\mathcal{S}_{ij}$ is originally defined as subsets of $\mathcal{N}_i^+=\mathcal{N}_i\bigcup\mathcal{M}_i$. Thus, here, $\Gamma_i(0)$ is composed of the sets $\mathcal{S}_{ij}$ by precluding any $\mathcal{S}_{ij}$ that includes at least one Byzantine agent. We have $|\Gamma_i(0)|=\binom{d_i-\kappa_i-1}{\sigma_i-1}$. Now consider the following process:

\noindent\textbf{$\clubsuit$ For $k=0,1,2,\cdots$, if the set $\Gamma_i(k)$ is non-empty:}

{\setlength{\leftskip}{0.3cm}
	\noindent Pick an arbitrary $\mathcal{S}_{ij^{\star}}$ from $\Gamma_i(k)$. From Lemma \ref{LM_ERW} \textbf{c}, the corresponding ${\phi}_{ij^{\star}}$ can be written as 
	\begin{align}\label{eq_weightst}
		{\phi}_{ij}=\sum_{\ell\in{\widehat{\mathcal{S}}_{ij^{\star}}}}{\beta}_{ij^{\star}\ell}x_\ell,\quad \sum_{\ell\in{\widehat{\mathcal{S}}_{ij^{\star}}}}{\beta}_{ij^{\star}\ell}=1,\quad {\beta}_{ij^{\star}\ell}\ge 0,
	\end{align}
	such that there exists at least one $\ell^*\in\widehat{\mathcal{S}}_{ij^*}$ with ${\beta}_{ij^{\star}\ell^*}\ge\frac{1}{n+1}.$ Here, since $i\notin\widehat{\mathcal{S}}_{ij^*}$, one has $\ell^*\neq i$.
	Further note that $\mathcal{S}_{ij^{\star}}\subset \mathcal{N}_i$, then $\widetilde{\mathcal{S}}_{ij^{\star}}= {\mathcal{S}}_{ij^{\star}}\setminus \mathcal{M}_i={\mathcal{S}}_{ij^{\star}}$. 
	Thus, $\widehat{\mathcal{S}}_{ij^{\star}}\subset \widetilde{\mathcal{S}}_{ij^{\star}}$ and according to Remark \ref{Rm_2}, one can replace the ${\xi}_{ij^{\star}\ell}$ in equation \eqref{RCCphi} by the  ${\beta}_{ij^{\star}\ell}$ in \eqref{eq_weightst}. Bringing this into equation \eqref{ieq_wbeta} leads to, for $\ell^*\neq i$,
	\begin{align}\label{ieq_wbeta2}
		w_{i\ell^*}\ge\frac{\xi_{ij^{\star}\ell^*}}{s_i+1}=\frac{{\beta}_{ij^{\star}\ell^*}}{s_i+1}\ge\frac{1}{(s_i+1)(n+1)}.
	\end{align}
	Now, update $\Gamma_i(k+1)=\Gamma_i(k)\setminus\left\{\mathcal{S}_{ij}~\big|~\ell^*\in\mathcal{S}_{ij}\right\}$, which removes any $\mathcal{S}_{ij}$ composed of agent $\ell^*$. Here, $|\Gamma_i(k)|=\binom{d_i-\kappa_i-1-k}{\sigma_i-1}$.
	Note that by doing this, the new $\ell^*$ to be obtained in the new iteration will be different from the ones obtained in previous iterations. Update $k$ with $k+1$.
	
	\setlength{\leftskip}{0pt}}
\noindent\textbf{End, and iterate the process from $\clubsuit$}

Now, consider  $\widehat{k}=(d_i-\kappa_i-\sigma_i)$, one has $|\Gamma_i(\widehat{k})|=\binom{d_i-\kappa_i-1-(d_i-\kappa_i-\sigma_i)}{\sigma_i-1}=1$. Thus, there is only one $\mathcal{S}_{ij}$ in set $\Gamma_i(\widehat{k})$. Clearly, $\Gamma_i(\widehat{k}+1)=\emptyset$. This indicates that the process described in $\clubsuit$ can be repeated for $d_i-\kappa_i-\sigma_i+1$  times, from $k=0$ to $k=d_i-\kappa_i-\sigma_i$.
Consequently, in equation \eqref{the_eq2}, for $\ell\neq i$, there exist at least $\gamma_i=d_i-\kappa_i-\sigma_i+1$ number of weights $w_{i\ell}\ge\frac{1}{(s_i+1)(n+1)}$. This together with the fact that $w_{ii}\ge\frac{1}{s_i+1}$ completes the proof of Theorem \ref{Th_1}.  \hfill\qed

\begin{remark} \label{R_sigma}
	In Theorem \ref{Th_1}, the values of $\gamma_i$ and $\alpha_i$ can be changed by tuning $\sigma_i$. 
	The choice of $\sigma_i$ also influences the computational complexity for calculating ${v}_i^\star$. We will provide more discussion on this in the next subsection. \hfill$\square$
\end{remark}

\subsection{Computational Complexity Analysis } \label{Compvsgamma}
We analyze the computational complexity denoted by $C_P$ of Algorithm \ref{Algorithm_Main}, which is mainly determined by steps \ref{st_start} to \ref{st_end}. Among these steps, the major computation comes from the selection of vector $\phi_{ij}$ from the set $\left(\bigcap_{k=1}^{b_i}\mathcal{H}(x_{\mathcal{B}_{ijk}})\right)\bigcap
\mathcal{H}(x_{\widehat{\mathcal{S}}_{ij}})$ in step \ref{st_b}. To do this, one can directly compute the vertex representation of the set by intersecting convex hulls in the $n$-dimensional space. However, given a large $b_i$, the convex combination may have numerous vertices. To avoid computing all of them, here, we show that a feasible $\phi_{ij}$ can be obtained by solving a linear programming problem (LP), as shown in Algorithm \ref{Algorithm_compute_v}.

\begin{algorithm2e}[h]
	\label{Algorithm_compute_v}
	\caption{Compute $\phi_{ij}$ by Linear Programming.}
	\SetAlgoLined
	\textbf{Input} $x_{\mathcal{B}_{ijk}}$, $b_i$ and $x_{\widehat{\mathcal{S}}_{ij}}$.\\
	Define variables $\bm{\mu}_{ij}=\col\{\mu_{ijk}|~ k=1,\cdots,b_i\}\in\mathbb{R}^{b_i|\mathcal{B}_{ijk}|}$ with each $\mu_{ijk}\in\mathbb{R}^{|\mathcal{B}_{ijk}|}$; and $\lambda_{ij}\in\mathbb{R}^{|\widehat{\mathcal{S}}_{ij}|}$. \\

	Define matrix $\bm{X}=\row\{x_\ell|~\ell\in\widehat{\mathcal{S}}_{ij}\}\in\mathbb{R}^{n\times |\widehat{\mathcal{S}}_{ij}|}$.
	For all $ k\in\{1,\cdots,b_i\}$, define matrices $\bm{X}_k=\row\{x_\ell|~\ell\in\mathcal{B}_{ijk}\}\in\mathbb{R}^{n\times |\mathcal{B}_{ijk}|}$.

Set equality constraint $\bm{1}^{\top}\lambda_{ij}=1$. For all $k\in\{1,\cdots,b_i\}$, set equality constraints $\bm{1}^{\top}\mu_{ijk}=1$  and
	$\bm{X} \lambda_{ij}-\bm{X}_k \mu_{ijk}=0$. \label{alg2_step_eq}\\
	
	Set inequality constraints $\bm{\mu}_{ij}\ge 0$; and $\lambda_{ij}\ge0$. \label{alg2_step_neq}\\
	
	Choose arbitrary vectors $\eta_1\in\mathbb{R}^{{b_i|\mathcal{B}_{ijk}|}}$ and $\eta_2\in\mathbb{R}^{|\widehat{\mathcal{S}}_{ij}|}$ such that $||\eta_1||_2=||\eta_2||_2=1$. Define a linear objective function $g(\bm{\mu}_{ij},\lambda_{ij})=\eta_1^{\top}\bm{\mu}_{ij}+\eta_2^{\top}\lambda_{ij}$.\\
	
	Run linear programming with objective function $g(\bm{\mu}_{ij},\lambda_{ij})$ subject to the defined equality/inequality constraints.\\
	
	Compute $\phi_{ij}=\sum_{\ell\in{\widehat{\mathcal{S}}_{ij}}}[\lambda_{ij}]_{\ell}x_{\ell}$.\\
	
\end{algorithm2e}
\begin{remark}
	In Algorithm \ref{Algorithm_compute_v}, we propose a linear program subject to the constraints of two convex polytopes,  namely $\bigcap_{k=1}^{b_i}\mathcal{H}(x_{\mathcal{B}_{ijk}})$ and $\mathcal{H}(x_{\widehat{\mathcal{S}}_{ij}})$. 
	Note that the equality constraints in line \ref{alg2_step_eq} and the inequality constraints in line \ref{alg2_step_neq} allow the elements in $\mu_{ijk}$, $ k\in\{1,\cdots,b_i\}$ and $\lambda_{ij}$ to specify convex combinations of $\bm{X}_k$ and $\bm{X}$, respectively. Then the intersection is characterized by $\bm{X} \lambda_{ij}-\bm{X}_k \mu_{ijk}=0$. 	
	Since we only need to obtain a feasible point from the intersection of the two polytopes, the objective function $g(\bm{\mu}_{ij},\lambda_{ij})$ can be chosen arbitrarily. \hfill$\square$
\end{remark}

Now, let $C_L$ denote the computational complexity for this linear program. According to \cite{YA15ASFCS}, one has
$$C_L=\mathcal{O}\left( (\text{nz}(W)+d_v^2) \sqrt{d_v}\right), $$
where $d_v$ is the number of variables, and $\text{nz}(W)$ is the non-zero entries of a constraint matrix $W$ that encodes both equality and inequality constraints.
From Algorithm \ref{Algorithm_compute_v}, the definitions of $|\mathcal{B}_{ijk}|$ and $b_i$, we know 
\begin{align}\label{eq_dv}
	d_v&=b_i|\mathcal{B}_{ijk}|+|\widehat{\mathcal{S}}_{ij}|=\binom {\sigma_i-1} {\kappa_i}(\sigma_i-\kappa_i)+(\sigma_i-1),
\end{align}
and
$$\text{nz}(W)=2d_v+nb_i(|\mathcal{B}_{ijk}|+|\widehat{\mathcal{S}}_{ij}|),$$
where $2d_v$ entries are associated with equality and inequality constraints on $\mu_{ijk}$, $ k\in\{1,\cdots,b_i\}$ and $\lambda_{ij}$, and $nb_i(|\mathcal{B}_{ijk}|+|\widehat{\mathcal{S}}_{ij}|)$ entries are associated with equality constraints $\bm{X} \lambda_{ij}-\bm{X}_k \mu_{ijk}=0$ for all $k\in\{1,\cdots,b_i\}$. 
Thus,
\begin{align*}
	\mathcal{C}_L&=\mathcal{O}\left( (2d_v+nb_i(|\mathcal{B}_{ijk}|+|\widehat{\mathcal{S}}_{ij}|)+d_v^2) \sqrt{d_v}\right)\\
	&=\mathcal{O}\left( (2d_v+2nd_v+d_v^2) \sqrt{d_v}\right)\\
	&=\mathcal{O}\left((n+d_v)d_v^{1.5}\right)	
\end{align*}
The second equation holds because $\sigma_i\ge n\kappa_i + 2$, thus $\mathcal{O}(b_i|\mathcal{B}_{ijk}|)=\mathcal{O}(b_i(\sigma_i-\kappa_i))=\mathcal{O}(d_v)$ and $\mathcal{O}(b_i|\widehat{\mathcal{S}}_{ij}|)=\mathcal{O}(b_i(\sigma_i-1))=\mathcal{O}(b_i(\sigma_i-\kappa_i))=\mathcal{O}(d_v)$.
Additionally, since steps \ref{st_start} to \ref{st_end} have to be repeated $s_i=\binom {d_i-1} {\sigma_i-1}$ times, one has
\begin{align}\label{eq_Complex}
	\mathcal{C}_P=s_i\mathcal{C}_L=\mathcal{O}\left(\binom {d_i-1} {\sigma_i-1}(n+d_v)d_v^{1.5}\right).
\end{align}


To continue, from Theorem \ref{Th_1}, one has $\sigma_i\in[n\kappa_i + 2, d_i-\kappa_i]$ and $\gamma_i=d_i-\kappa_i-\sigma_i+2$. Thus, choosing $\sigma_i=n\kappa_i + 2$ leads to larger $\gamma_i$. This means the obtained ${v}_i^\star$ uses more information from its normal neighbors, so that can potentially increase the convergence rate of the consensus-based distributed algorithm. As will be shown in the next section, such choice of $\sigma_i$ will be used to establish a  theoretical guarantee on an exponential convergence rate.
By taking $\sigma_i=n\kappa_i + 2$, equation \eqref{eq_Complex} yields
\begin{align}\label{eq_Complex2}
	\mathcal{C}_P=\mathcal{O}\left(\binom {d_i-1} {n\kappa_i + 1}(n+d_v)d_v^{1.5}\right),
\end{align}
and equation \eqref{eq_dv} yields
\begin{align}\label{eq_dv2}
	d_v&=\binom {n\kappa_i + 1} {\kappa_i}((n-1)\kappa_i + 2)+(n\kappa_i + 1).
\end{align}
Clearly, when $n$ and $\kappa_i$ are constants, the complexity $\mathcal{C}_P$ of calculating a resilient convex combination is polynomial in the number of agent's neighbors $d_i$. 
\begin{remark}\label{Rm_CP}
	Since $\mathcal{C}_P$ is exponential in $n$, our approach is more suitable for problems with low dimensions, such as the resilient multi-robot rendezvous problem with local constraints.
	 By observing \eqref{eq_dv} and \eqref{eq_Complex}, the complexity $\mathcal{C}_P$  will decrease as $\sigma_i\to (d_i-\kappa_i)$, due to the fast decay of $\binom {d_i-1} {\sigma_i-1}$ for $\sigma_i-1\ge\frac{d_i-1}{2}$. However, doing so leads to small $\gamma_i$ and can potentially decrease the convergence speed of the consensus-based algorithm. \hfill$\square$
\end{remark}

\section{Resilient Constrained Consensus}\label{Sec_CTC}

By using the  $(\gamma_i,\alpha_i)$-Resilient Convex Combination developed in the previous section, we introduce an approach that allows multi-agent systems to achieve resilient constrained consensus in the presence of Byzantine attacks. This distinguishes the paper from existing ones that are only applicable to unconstrained consensus problems \cite{Shreyas13Selected,SH17Auto,KLS20Arxiv,PRJ17NIPS,mendes2015DC,N14ICDCN,PH15IROS,PH16ICRA,PH17TOR,WD13DCG,PME08ACM,LN14ACM,XSS18NACO}.

\subsection{Consensus Algorithm Under Byzantine Attack} \label{Sec_CCUBA}
By recalling Sec. \ref{subsec_CBDA}, consider a multi-agent network with $m$ agents characterized by the time-varying network $\mathbb{G}(t)$.
In distributed consensus problems, each agent is associated with a local state $x_i\in\mathbb{R}^n$ and a constraint set $\mathcal{X}_i\subseteq \mathbb{R}^n$. Our goal is to find a common point $x^*\in\mathbb{R}^n$ such that
\begin{align} %
	x_1=&\cdots=x_m=x^*, \qquad x^*\in\bigcap_{i=1}^m \mathcal{X}_i.\tag{\ref{Csed_Csus2}}
\end{align}
Based on \cite{WR05TAC}, for constrained consensus problems, where $\exists i\in\{1,\cdots,m\}$, $\mathcal{X}_i\subsetneq\mathbb{R}^n$, the update \eqref{eq_algCC}
is effective if there exists a sequence of contiguous uniformly bounded time intervals such that in every interval, ${\mathbb{G}}(t)$ is strongly connected for at least one time step.\footnote{A network is called strongly connected if every agent is reachable through a path from every other agent in the network.} In addition, to guarantee an exponential convergence rate, the weights of the spanning graph must be strictly positive and bounded away from $0$.

Now, we assume that the network ${\mathbb{G}}(t)$ suffers a Byzantine attack as described in Sec. \ref{subsec_Byzantine},  where the number of Byzantine neighbors of each normal agent is $\kappa_i(t)$, and $\kappa_i(t)\le\bar{\kappa}$, $\forall i\in\{1,\cdots,m\}$, $t=1,2,\cdots$. The new network is characterized by ${\mathbb{G}}^{+}(t)$. To guarantee that the normal agents in ${\mathbb{G}}^{+}(t)$ are not influenced by the Byzantine agents, in the following, we introduce a way to incorporate the approach proposed in Algorithm \ref{Algorithm_Main} to ensure resilient constrained consensus.
Along with this, instead of directly using the result in \cite{WR05TAC}, i.e., relaying on strongly connected graphs, we will introduce concepts of network redundancy and constraint redundancy (cf. Assumption \ref{AS_RD}) to derive a relaxed topological condition.
	

\subsection{Main Result: Resilient Constrained Consensus}\label{Sec_RCC}

For update \eqref{eq_algCC}, substitute the $v_i(t)$ with the ${v}_i^\star(t)$ of Algorithm \ref{Algorithm_Main}.
Then,  the vector ${v}_i^\star(t)$ is a convex combination that automatically removes any information from Byzantine agents; however, as shown in Remark \ref{R_Th1_2}, ${v}_i^\star(t)$ also excludes some information from certain normal neighbors of agent $i$. In other words, substituting the $v_i(t)$ with  ${v}_i^\star(t)$ is equivalent to a classical consensus update on a graph $\widehat{\mathbb{G}}(t)$, where all Byzantine agents (and some edges from normal agents) are removed. In particular, the network $\widehat{\mathbb{G}}(t)$ is defined as follows.

\begin{definition}[Resilient Sub-graph]\label{def_hG}
	Define a graph $\widehat{\mathbb{G}}(t)$ with $\mathscr{V}(\widehat{\mathbb{G}}(t))=\mathscr{V}({\mathbb{G}}(t))$ (same agent set).\footnote{ To clarify, ${\mathbb{G}}(t)$ is the graph without Byzantine agents.} For each time instance $t$, associate the edges of $\widehat{\mathbb{G}}(t)$ with the weights  $w^{\star}_{i\ell}$, $\forall i,\ell\in\{1,\cdots,m\}$ defined in equation \eqref{eq_reconvexcomb}, which is the ($(\gamma_i,\alpha_i)$-resilient convex combination obtained by Algorithm \ref{Algorithm_Main} with parameter $\sigma_i(t)=n\kappa_i(t)+2$.
\end{definition}

Obviously, $\widehat{\mathbb{G}}(t)$ is composed of only normal agents. Furthermore, according to Theorem \ref{Th_1}, each agent in $\widehat{\mathbb{G}}(t)$ retains at least $\gamma_i(t)=d_i(t)-\kappa_i(t)-\sigma_i(t)+2$ of its incoming edges with other normal agents, with weights no less than $\alpha_i(t)$. 
Thus, by associating the edges of $\widehat{\mathbb{G}}(t)$ with the weights in \eqref{eq_reconvexcomb}, $\widehat{\mathbb{G}}(t)$ must be an edge-induced sub-graph of ${\mathbb{G}}(t)$.
We choose $\sigma_i(t)=n\kappa_i(t)+2$ in order to  maximize the number of retained edges in $\widehat{\mathbb{G}}(t)$. In this case, $\gamma_i(t)=d_i(t)-(n+1)\kappa_i(t)$.

Now, to see whether the consensus update running on $\widehat{\mathbb{G}}(t)$ is able to solve the problem \eqref{Csed_Csus2},  according to \cite{PW13TAC}, we need to check whether there exists a sequence of contiguous uniformly bounded time intervals such that the union of network $\widehat{\mathbb{G}}(t)$ across each interval is strongly connected.
However, since the edge elimination for each agent in ${\mathbb{G}}(t)$ is a function of the unknown (and arbitrary) actions of Byzantine agents,
it can be difficult to draw any conclusion on the strong-connectivity of the union of   $\widehat{\mathbb{G}}(t)$, based on the union of ${\mathbb{G}}(t)$. 
Actually,  even if the ${\mathbb{G}}(t)$ is fully connected, the obtained $\widehat{\mathbb{G}}(t)$ may still be non-strongly connected.

Here, we provide a toy example to show why strong-connectivity is critical. Consider a fully connected network ${\mathbb{G}}(t)$ with $m=4$ agents, where each agent possesses a local constraint $\mathcal{X}_i$. Suppose for all $i\in\{1,2,3,4\}$, $\kappa_i=1$ and $n=1$. Thus, $\sigma_i=n\kappa_i(t)+2=3$, $d_i=m+\kappa_i=5$ and $\gamma_i=d_i-\kappa_i-\sigma_i+2=5-1-3+2=3$.
This guarantees that each agent has three neighbors (including itself), i.e., each agent excludes one of its neighbors from a fully connected graph. Then as shown in Fig. \ref{Fig_weak_4}, if all agents 1-3 exclude agent 4 from their neighbor sets, then the obtained $\widehat{\mathbb{G}}(t)$ cannot be strongly connected, because there is no path to deliver information from agent $4$ to agents $1,2,3$.
\noindent\begin{figure}[t]
	\vspace{-0.1cm}
	\centering
	\includegraphics[width=5.5 cm]{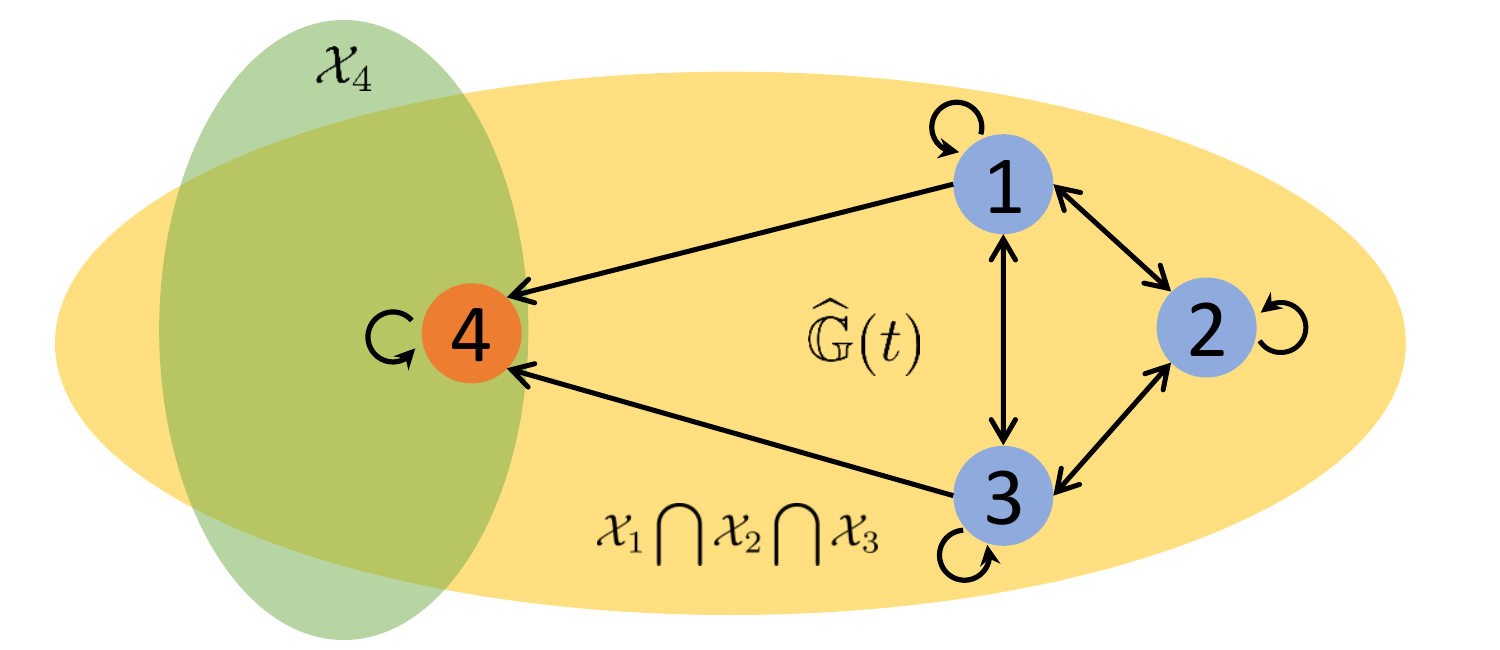}
	\caption{A resilient sub-graph of 4 agents with $\gamma_i=3$. Yellow shade denotes the constraint of agents 1-3, green shade denotes the constraint of agent 4.}
	\label{Fig_weak_4}
\end{figure}
Note that, for the $\widehat{\mathbb{G}}(t)$ in Fig. \ref{Fig_weak_4}, agents 1-3 are strongly connected, and by running update \eqref{eq_algCC}, they are able to reach a consensus within the intersection of $\bigcap_{i=1}^3\mathcal{X}_i$. However, since agents 1-3 are not able to receive any information from agent 4, they will not move towards agent 4 and, therefore, have no awareness of the constraint $\mathcal{X}_4$. On the other hand for agent 4, even though it can receive information from agents 1 and 3, the existence of constraint $\mathcal{X}_4$ prevents agent 4 from moving towards agents 1-3, thus, a consensus cannot be reached, even though there are no Byzantine agents in the network.

As validated by the example in Fig. \ref{Fig_weak_4}, if one directly replaces the $v_i(t)$ in \eqref{eq_algCC} with the ${v}_i^\star$ in Algorithm \ref{Algorithm_Main}, the underlying network $\widehat{\mathbb{G}}(t)$ may not be sufficient for the agents in the network to reach resilient constrained consensus. 
This phenomenon necessitates the following assumption, where we introduce redundancies for both the network and the constraint sets.
\begin{assumption}[Redundancy Conditions]\label{AS_RD} We assume there exists $\varphi\in\mathbb{Z}_{+}$ such that the following conditions hold.
	\begin{enumerate}[label=\textbf{\alph*}.]
		\item $\textit{[Network Redundancy]}$ {By Theorem \ref{Th_1}, one has $\alpha_i(t)=\frac{1}{(s_i(t)+1)(n+1)}$. Define $\underline{\alpha}=\min \{\alpha_i(t)\}$, $\forall t\in\{1,2,\cdots\}$ and $\forall i\in\{1,2,\cdots,m\}$.  Such $\underline{\alpha}\in\mathbb{R}_+$ must exist and is bounded away from zero due to the boundedness of $s_i(t)$.} Now, consider the network $\widehat{\mathbb{G}}(t)$ defined in Definition \ref{def_hG}. Suppose $\forall t=1,2,\cdots$, $\widehat{\mathbb{G}}(t)$ is spanned by a rooted directed graph\footnote{A spanning rooted graph contains a least one agent that can reach every other agent in the network through a path.} $\mathbb{N}(t)$,  where $\mathscr{V}(\mathbb{N}(t))=\mathscr{V}(\widehat{\mathbb{G}}(t))$ and $\mathbb{N}(t)$ has the edges of $\widehat{\mathbb{G}}(t)$ with weights no less than $\underline{\alpha}$. 
		Furthermore, let $\mathbb{S}(t)$ be the root strongly connected component of $\mathbb{N}(t)$, and suppose there exists an infinite sequence of contiguous uniformly bounded time intervals such that in every interval, $|\mathscr{V}(\mathbb{S}(t))|\ge \varphi$ for at least one time step.

		\item $\textit{[Constraints Redundancy]}$
		Let $\mathcal{V}=\{1,\cdots,m\}$ denote the agent set of network $\mathbb{G}(t)$. Suppose for all $\bar{\mathcal{V}}\subset \mathcal{V}$ with $|\bar{\mathcal{V}}|\ge \varphi$, there holds
		\begin{align}\label{eq_subsetX}
			\bigcap_{i\in\bar{\mathcal{V}}}\mathcal{X}_i=\bigcap_{i=1}^m \mathcal{X}_i.
		\end{align}
	\end{enumerate}
\end{assumption}

The proposed assumption leads to the following result.

\begin{theorem}\longthmtitle{Resilient Constrained Consensus}  \label{Th_MCC}
	Consider a network ${\mathbb{G}}(t)$ of normal agents. Suppose ${\mathbb{G}}(t)$ experiences a Byzantine attack such that each normal agent in ${\mathbb{G}}(t)$ has ${\kappa}_i(t)$ malicious neighbors. Suppose Assumptions \ref{AS_RD}{a} and {b} hold.
	Then, by replacing the $v_i(t)$ in \eqref{eq_algCC} by the ${v}_i^\star$ in Algorithm \ref{Algorithm_Main} with $\sigma_i(t)=n\kappa_i(t)+2$, update \eqref{eq_algCC} will drive the states in all agents of ${\mathbb{G}}(t)$ to converge exponentially fast to a common state which satisfies the constrained consensus \eqref{Csed_Csus2}.
\end{theorem}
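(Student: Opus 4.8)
The plan is to first use Theorem~\ref{Th_1} to recast the dynamics, and then to combine the merely rooted connectivity of $\widehat{\mathbb{G}}(t)$ with the constraint redundancy to force global agreement at a feasible point. By Theorem~\ref{Th_1} and Definition~\ref{def_hG}, replacing $v_i(t)$ by $v_i^\star(t)$ turns update \eqref{eq_algCC} into the projected consensus recursion $x_i(t+1)=\mathcal{P}_i\big(\sum_{\ell\in\mathcal{N}_i(t)}w_{i\ell}^\star(t)x_\ell(t)\big)$ on the resilient sub-graph $\widehat{\mathbb{G}}(t)$, which involves only normal agents and whose spanning rooted subgraph $\mathbb{N}(t)$ carries weights no smaller than $\underline{\alpha}>0$. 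Thus the Byzantine agents are completely isolated, and the problem reduces to a constrained-consensus problem on a time-varying, rooted (not necessarily strongly connected) graph of normal agents.

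The workhorse is the non-expansiveness of the projection: for any fixed $y\in\mathcal{X}:=\bigcap_{i=1}^m\mathcal{X}_i$, since $y\in\mathcal{X}_i$ one has $\|x_i(t+1)-y\|\le\|\sum_\ell w_{i\ell}^\star x_\ell(t)-y\|\le\sum_\ell w_{i\ell}^\star\|x_\ell(t)-y\|$. This immediately yields boundedness of all trajectories and shows that $\max_i\|x_i(t)-y\|$ is non-increasing. Observe also that, because $x_i(t+1)=\mathcal{P}_i(\cdot)\in\mathcal{X}_i$ for every $t$, each agent's state permanently lies in its own constraint; hence if all agents converge to a common limit $x^\star$, then feasibility $x^\star\in\mathcal{X}$ is automatic, and the remaining task is to establish exponential \emph{agreement}.

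I would establish agreement in two stages keyed to the condensation of $\mathbb{N}(t)$. The root strongly connected component $\mathbb{S}(t)$ is a source of the condensation, so its agents receive information only from within $\mathbb{S}(t)$; since $|\mathscr{V}(\mathbb{S}(t))|\ge\varphi$ infinitely often over uniformly bounded intervals and its internal weights are bounded below by $\underline{\alpha}$, the constrained-consensus result of \cite{WR05TAC} (cited after \eqref{Csed_Csus2}) applies to this strongly connected, closed sub-system and drives its states exponentially to a common point $x^\star\in\bigcap_{i\in\mathscr{V}(\mathbb{S}(t))}\mathcal{X}_i$. Here the constraint redundancy (Assumption~\ref{AS_RD}\textbf{b}) enters decisively: because $|\mathscr{V}(\mathbb{S}(t))|\ge\varphi$, equation \eqref{eq_subsetX} forces $\bigcap_{i\in\mathscr{V}(\mathbb{S}(t))}\mathcal{X}_i=\mathcal{X}$, so $x^\star\in\mathcal{X}_j$ for \emph{every} agent $j$. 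In the second stage I treat $x^\star$ as a common fixed point ($\mathcal{P}_j(x^\star)=x^\star$ for all $j$) and apply the non-expansiveness inequality with $y=x^\star$: writing $\phi(t)=\max_i\|x_i(t)-x^\star\|$, the rooted structure of $\mathbb{N}(t)$, the bounded-interval condition, and the uniform weight bound $\underline{\alpha}$ yield a per-interval contraction $\phi(t+T)\le(1-c)\phi(t)$ with $c>0$, propagating $x^\star$ outward along paths from the root and giving exponential convergence of all agents to $x^\star\in\mathcal{X}$.

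The main obstacle is the time-varying nature of the root strongly connected component $\mathbb{S}(t)$: its membership may change from step to step, so it cannot literally be treated as a single fixed closed subsystem, and one must argue that agreement at a point of $\mathcal{X}$ still emerges despite a shifting set of ``driving'' agents. The facts that the constraint redundancy holds for \emph{every} vertex set of size at least $\varphi$ (not one particular set) and that $\max_i\|x_i(t)-y\|$ is monotone for \emph{all} $y\in\mathcal{X}$ are precisely the tools that make the argument robust to this variation; carefully synchronizing the root-component convergence with the outward propagation so as to extract a single exponential rate over the uniformly bounded intervals is the delicate part of the proof.
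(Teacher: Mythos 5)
Your overall architecture (isolate the Byzantine agents via Theorem \ref{Th_1}, get agreement on a ``core'' set of at least $\varphi$ agents, use constraint redundancy to make that agreement point globally feasible, then propagate it to the rest along the rooted structure) mirrors the paper's, but your stage~1 rests on a premise that is false. You assert that the root strongly connected component $\mathbb{S}(t)$ is a source of the condensation and hence ``receives information only from within $\mathbb{S}(t)$,'' and on that basis you apply the constrained-consensus result of \cite{WR05TAC} to $\mathbb{S}(t)$ as an autonomous strongly connected subsystem. But $\mathbb{S}(t)$ is the root component of $\mathbb{N}(t)$, and by Assumption \ref{AS_RD}a the graph $\mathbb{N}(t)$ retains only those edges of $\widehat{\mathbb{G}}(t)$ whose weights are at least $\underline{\alpha}$. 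The update \eqref{eq_algCC} with $v_i^\star$ runs on the full $\widehat{\mathbb{G}}(t)$: Algorithm \ref{Algorithm_Main} may well assign small positive weights (below $\underline{\alpha}$) to edges entering $\mathbb{S}(t)$ from outside, and neither Theorem \ref{Th_1} nor Assumption \ref{AS_RD}a rules this out. So $\mathbb{S}(t)$ is closed in $\mathbb{N}(t)$ but not in the actual dynamics, and the ``closed subsystem'' application of \cite{WR05TAC} does not go through.

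The second gap is the one you name yourself but defer as ``the delicate part'': the membership of $\mathbb{S}(t)$ changes with $t$, so there is no fixed subsystem whose states converge to a single point $x^\star$, and without that point your stage~2 contraction argument has nothing to propagate. This is not a synchronization detail; it is the central difficulty, and the paper resolves it with machinery your proposal lacks. Concretely, the paper (i) invokes the Infinite Pigeonhole Principle to extract a \emph{fixed} set of $\varphi$ agents (containing the root of $\mathbb{N}(t)$) whose induced graph is strongly connected infinitely often; (ii) analyzes the backward product $\widehat{U}(T)=\widehat{W}(T)\cdots\widehat{W}(0)$ of the row-stochastic weight matrices, which, thanks to the rooted spanning subgraph with weights bounded below by $\underline{\alpha}$ at every $t$, converges exponentially to $\bm{1}_m\widehat{u}^*$ with the entries of $\widehat{u}^*$ corresponding to that fixed set strictly positive; and (iii) applies Lemma 8 of \cite{PW13TAC} to conclude that the projection errors $e_i(t)=\mathcal{P}_i(v_i(t))-v_i(t)$ vanish for exactly those agents, which is what produces a consensus point $x^*$ with $\mathcal{P}_i(x^*)=x^*$; constraint redundancy (Assumption \ref{AS_RD}b) then yields $x^*\in\bigcap_{i=1}^m\mathcal{X}_i$, and a leader--follower argument (your stage~2 is this in spirit, and is the salvageable part of your proposal) handles the agents outside the core. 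Your non-expansiveness and monotonicity observations are correct and useful, but they do not substitute for steps (i)--(iii), which are where the theorem is actually proved.
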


\noindent{\bf Proof of Theorem \ref{Th_MCC}:}\\
Here, 
we prove Theorem \ref{Th_MCC} by referring to some existing results in \cite{J63AMS,JH77MAA,KX12RM,PW13TAC,CW-ZJ-RW:15}. 
Let $\widehat{W}(t)$ be the (row stochastic) weight matrix corresponding to the network $\widehat{\mathbb{G}}(t)$.  Define matrix $\widehat{U}(T)$ as:
\begin{align}\label{eq_product_of_W}
	\widehat{U}(T)\triangleq \widehat{W}(T)\cdot \widehat{W}(T-1)\cdots \widehat{W}(1)\cdot \widehat{W}(0).
\end{align}
Since all $\widehat{W}(t)$ are row stochastic and all $\widehat{\mathbb{G}}(t)$ are spanned by a rooted directed graph with edge weights no less than $\underline{\alpha}$ (Assumption  \ref{AS_RD}{a}), then according to \cite{J63AMS,JH77MAA,KX12RM}, as $T\to\infty$, the $\widehat{U}(T)$ in equation \eqref{eq_product_of_W} must converge exponentially fast to a constant matrix $\widehat{U}^*$ such that all rows of $\widehat{U}^*$ are identical, i.e. $\widehat{U}^*=\bm{1}_m \widehat{u}^*$, $\widehat{u}^*\in\mathbb{R}^{1\times m}$.
In addition, from Assumption \ref{AS_RD}{a}, $\widehat{\mathbb{G}}(t)$ contains a component $\mathbb{S}(t)$ with at least $\varphi$ agents (containing the root of $\mathbb{N}(t)$), and for $t=1,2,\cdots$, $\mathbb{S}(t)$ is strongly connected infinitely often. {Furthermore, since $\mathbb{N}(t)$ has a finite number of agents, by Infinite Pigeonhole Principle \cite{AM94Combi}, there must exist a fixed set of $\varphi$ agents (containing the root of $\mathbb{N}(t)$) such that the graph induced by these agents is strongly connected infinitely often. 
Corresponding to this set of agents, at least $\varphi$ columns of $\widehat{U}^*$ are strictly positive~\cite{CXL16TAC}. 
Furthermore, since $\widehat{U}^*=\bm{1}_m \widehat{u}^*$, at least $\varphi$ entries of $\widehat{u}^*$ must be strictly positive.}
For $i\in\{1,\cdots,m\}$, define
\begin{align}
	e_i(t)=\left[\mathcal{P}_i(v_i(t))-v_i(t)\right].
\end{align}
Accordingly, update \eqref{eq_algCC} can be rewritten as 
\begin{align}
	x_i(t+1)=v_i(t)+e_i(t),  \label{eq_rwalgCC}
\end{align}
where $v_i(t)={v}_i^\star=\sum_{\ell\in{\widehat{\mathcal{N}}}_i}w_{i\ell}(t)x_\ell(t)$ and the $w_{i\ell}(t)$ is the $i,\ell$th entry of the adjacency matrix $\widehat{W}(t)$. 

To proceed, recall our above proof  that the $\widehat{U}(T)$ defined in \eqref{eq_product_of_W} converges exponentially fast to the matrix $\bm{1}_m\widehat{u}^*$, and the entries of $\widehat{u}^*$ associated with $\mathscr{V}(\mathbb{S}(t))$ must be strictly positive. Then by utilizing Lemma 8 in \cite{PW13TAC}\footnote{The main result in \cite{PW13TAC} considers the convergence of states to the intersection of constraints in all agents, which requires \textit{all} entries in $\widehat{u}^*$ to be strictly positive. Here, we utilize Lemma 8 in \cite{PW13TAC} to prove the convergences of states to the intersection of constraints possessed by agents in $\mathbb{S}(t)$, since only the corresponding entries in $\widehat{u}^*$ are strictly positive.}, for $i\in\mathcal{I}$, there holds $\|e_i(t)\|\to 0$ as $t\to \infty$. This further implies that for $i\in\mathscr{V}(\mathbb{S}(t))$, the states $x_i(t)$ will reach a consensus at a point $x^*$, such that $\mathcal{P}_i(x^*)=x^*$, $i\in\mathcal{I}$. Finally, by Assumption \ref{AS_RD}b, since $|\mathscr{V}(\mathbb{S}(t))|\ge\varphi$, one has $x^*\in\bigcap_{i\in\mathscr{V}(\mathbb{S}(t))} \mathcal{X}_i=\bigcap_{i=1}^m \mathcal{X}_i$.

Now, to complete the proof of Theorem \ref{Th_MCC}, we only need to show that for agents $j\notin \mathscr{V}(\mathbb{S}(t))$, the states $x_j(t)$ also converge to $x^*$. Note that
$x^*\in\bigcap_{i=1}^m \mathcal{X}_i\subset \bigcap_{j\notin\mathscr{V}(\mathbb{S}(t))} \mathcal{X}_j$.
Thus, for agents $j\notin \mathscr{V}(\mathbb{S}(t))$, their dynamics are dominated by a leader-follower consensus process within the domain of $\bigcap_{j\notin\mathcal{I}} \mathcal{X}_j$, where any agent in $\mathscr{V}(\mathbb{S}(t))$ can be considered as the leader. Since from Assumption \ref{AS_RD}{a}, $\mathbb{N}(t)$ is a rooted graph and the root is in $\mathscr{V}(\mathbb{S}(t))$, one has $x_j(t)$, $j\notin \mathscr{V}(\mathbb{S}(t))$ also converge exponentially to $x^*$ \cite{CW-ZJ-RW:15}. This completes the proof. \hfill\qed

\begin{remark}\longthmtitle{Justification of Assumption \ref{AS_RD}} \label{CO_SGCC}
While the example in Fig. \ref{Fig_weak_4} explains why Assumption \ref{AS_RD} is required, the above proof demonstrates that this assumption is sufficient for using Algorithm \ref{Algorithm_Main} to achieve resilient constraint consensus. Specifically, without the network redundancy, the states in all agents may not reach a consensus; without the constraint redundancy, the consensus point may not satisfy the constraints in all agents.
	
	Obviously, for Assumption \ref{AS_RD}{b},  one can guarantee that equation \eqref{eq_subsetX} holds by introducing overlaps on the local constraint $\mathcal{X}_i$ of each agent. However, for Assumption \ref{AS_RD}{a}, since $\widehat{\mathbb{G}}(t)$ is an edge-induced sub-graph of ${\mathbb{G}}(t)$, it not straightforward to see what ${\mathbb{G}}(t)$ can lead to a $\widehat{\mathbb{G}}(t)$ that satisfies the network redundancy conditions in Assumption \ref{AS_RD}{a}. Motivated by this, in the following,  we propose a sufficient condition for network redundancy. Based on this, one can easily design a network ${\mathbb{G}}(t)$, which always guarantees that Assumption \ref{AS_RD}{a} holds for certain $\varphi$.   \hfill$\square$
\end{remark}

\medskip
\begin{definition}\longthmtitle{$r$-reachable set \cite{Shreyas13Selected}}\label{def_rreach}
	Given a directed graph $\mathbb{G}$ and a nonempty subset $\mathcal{A}_s$ of its agents $\mathscr{V}(\mathbb{G})$, we say $\mathcal{A}_s$ is an r-reachable set if $\exists i \in \mathcal{A}_s$ such that $|\mathcal{N}_i \setminus \mathcal{A}_s| \ge r$.
\end{definition}

\begin{corollary}\label{CO_A1C}\longthmtitle{Sufficient condition for network redundancy}
	 Suppose for all $ t=1,2,\cdots$, the network ${\mathbb{G}(t)}$ has a fully connected sub-graph\footnote{A fully connected component is a complete graph, i.e., all agents in the component have incoming edges from all other agents in that component.} $\mathbb{F}$ (time-invariant) whose number of agents satisfies $|\mathscr{V}(\mathbb{F})|=f\ge2n\bar{\kappa}+1$.
	 Furthermore, we assume that any subset of $\mathscr{V}({\mathbb{G}(t)})\setminus \mathscr{V}(\mathbb{F})$ (set of agents that are in ${\mathbb{G}(t)}$ but not in $\mathbb{F}$) is $(n\bar{\kappa}+1)$-reachable in ${\mathbb{G}(t)}$.
	 Now if the network ${\mathbb{G}(t)}$ is under Byzantine attack and each agent adds ${\kappa}_i(t)$ malicious neighbors in each time step, such that $\kappa_i(t)\le\bar{\kappa}$, we let each agent in the system run Algorithm \ref{Algorithm_Main} by choosing $\sigma_i(t)=n\bar\kappa_i+2$. Then Assumption \ref{AS_RD}{a} must hold for $\varphi=f-n\bar{\kappa}$.
\end{corollary}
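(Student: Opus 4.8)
The plan is to verify directly that the hypotheses on $\mathbb{G}(t)$ force the ``heavy'' subgraph $\mathbb{N}(t)$ (the subgraph of $\widehat{\mathbb{G}}(t)$ keeping only edges of weight at least $\underline{\alpha}$) to be rooted, with a root strongly connected component of size at least $\varphi=f-n\bar{\kappa}$, for every $t$; since this then holds at every time step, the ``infinitely often'' requirement of Assumption \ref{AS_RD}\textbf{a} follows trivially. The single quantitative input I will use throughout is extracted from Theorem \ref{Th_1} and Remark \ref{R_Th1_2} with the choice $\sigma_i(t)=n\kappa_i(t)+2$: in $\mathbb{N}(t)$ each normal agent $i$ retains heavy incoming edges from all but at most $n\kappa_i(t)\le n\bar{\kappa}$ of its \emph{normal} in-neighbors other than itself. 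Everything reduces to careful bookkeeping of this ``at most $n\bar{\kappa}$ dropped edges per agent.''

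First I would analyze the restriction of $\mathbb{N}(t)$ to the complete, time-invariant subgraph $\mathbb{F}$, all of whose agents are normal. Let $C$ be any source strongly connected component of this restriction (an SCC receiving no heavy edge from $\mathscr{V}(\mathbb{F})\setminus C$). Since $\mathbb{F}$ is complete, every $a\in C$ originally receives from all $f-|C|$ agents of $\mathscr{V}(\mathbb{F})\setminus C$, and all of these edges must be among the at most $n\bar{\kappa}$ that $a$ drops; hence $f-|C|\le n\bar{\kappa}$, i.e. $|C|\ge f-n\bar{\kappa}=\varphi$. I then use $f\ge 2n\bar{\kappa}+1$ to rule out two distinct source components: two disjoint sets each of size $\ge f-n\bar{\kappa}$ would total $2(f-n\bar{\kappa})\ge f+1>f$ agents, impossible inside $\mathbb{F}$. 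Thus there is a \emph{unique} source component $C^\star$ with $|C^\star|\ge\varphi$, and since a finite acyclic condensation with a unique source has that source reaching every node, $C^\star$ reaches every agent of $\mathbb{F}$ through heavy edges.

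Next I would propagate reachability to the agents outside $\mathbb{F}$ via the $(n\bar{\kappa}+1)$-reachability hypothesis (Definition \ref{def_rreach}). Let $R$ be the set of agents reachable from $C^\star$ in $\mathbb{N}(t)$, so that $\mathscr{V}(\mathbb{F})\subseteq R$. If $R\neq\mathscr{V}(\mathbb{G}(t))$, set $\mathcal{A}_s=\mathscr{V}(\mathbb{G}(t))\setminus R$, a nonempty subset of $\mathscr{V}(\mathbb{G}(t))\setminus\mathscr{V}(\mathbb{F})$. By hypothesis some $i\in\mathcal{A}_s$ satisfies $|\mathcal{N}_i\setminus\mathcal{A}_s|\ge n\bar{\kappa}+1$; these are normal in-neighbors of $i$ lying in $R$, and since $i$ drops at most $n\bar{\kappa}$ of its normal in-edges, at least one heavy edge from $R$ into $i$ survives in $\mathbb{N}(t)$, forcing $i\in R$, a contradiction. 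Hence $R=\mathscr{V}(\mathbb{G}(t))$, so $C^\star$ reaches all agents: $\mathbb{N}(t)$ is a spanning rooted graph whose roots include every agent of $C^\star$, and because any two roots reach each other they lie in a single SCC, so the root component $\mathbb{S}(t)$ contains $C^\star$ and satisfies $|\mathscr{V}(\mathbb{S}(t))|\ge\varphi$.

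I expect the main obstacle to be the careful reconciliation of ``heavy'' edges (weight $\ge\underline{\alpha}$) with the full edge set of $\widehat{\mathbb{G}}(t)$: Theorem \ref{Th_1} only guarantees that $\gamma_i$ of the resilient weights exceed $\alpha_i$, so I must consistently argue inside $\mathbb{N}(t)$ rather than $\widehat{\mathbb{G}}(t)$ and translate ``$\gamma_i$ heavy edges retained'' into ``at most $n\bar{\kappa}$ normal in-neighbors dropped,'' counting only edges to agents other than $i$ since the self-loop is irrelevant for connectivity. The remaining content is the two counting arguments above, and the decisive inequality is $2(f-n\bar{\kappa})>f$, which is precisely where the redundancy $f\ge 2n\bar{\kappa}+1$ is consumed.
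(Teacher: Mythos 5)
Your proof is correct and follows essentially the same route as the paper's: both extract from Theorem \ref{Th_1} (with $\sigma_i(t)=n\kappa_i(t)+2$) that each normal agent keeps heavy edges from all but at most $n\bar{\kappa}$ of its normal in-neighbors, both use this bookkeeping inside the complete subgraph $\mathbb{F}$ to exhibit a strongly connected component of size at least $\varphi=f-n\bar{\kappa}$, and both propagate rootedness to the agents outside $\mathbb{F}$ by converting the $(n\bar{\kappa}+1)$-reachability hypothesis into $1$-reachability in $\mathbb{N}(t)$. The only cosmetic difference is how the bound $f\ge 2n\bar{\kappa}+1$ is consumed: you rule out a second source component by cardinality ($2(f-n\bar{\kappa})>f$) and then invoke uniqueness of the source in the condensation, whereas the paper argues each agent of $\mathbb{F}$ outside $\mathbb{S}(t)$ retains a direct heavy edge from $\mathbb{S}(t)$ (since $\varphi>n\bar{\kappa}$); these are equivalent uses of the same inequality.
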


\begin{proof}
Recall from Definition \ref{def_hG} that $\widehat{\mathbb{G}}(t)$  is the network associated with the weight coefficients of \eqref{eq_reconvexcomb} obtained in Algorithm \ref{Algorithm_Main}. According to Theorem \ref{Th_1}, each agent $i$ of $\widehat{\mathbb{G}}(t)$ must have at least $\gamma_i(t)=d_i(t)-\kappa_i(t)-\sigma_i(t)+2$ neighbors (including itself) with weights no less than  $\alpha_i(t)= \frac{1}{(s_i(t)+1)(n+1)}$. 
Here, $d_i(t)=m_i(t)+\kappa_i(t)$ with $d_i(t)=|\mathcal{N}_i^+(t)|$ and $m_i(t)=|\mathcal{N}_i(t)|$. 
Thus, we can easily derive $\gamma_i(t)=m_i(t)-n{\kappa}_i(t)\ge m_i(t)-n\bar{\kappa}$.
To continue, let $\underline{\alpha}=\min~\{\alpha_i(t)~|~i=1,\cdots,m;~ t=1,2\cdots\}$. Followed by Assumption \ref{AS_RD}{a}, for each time-step, we define the edge-induced sub-graph $\mathbb{N}(t)$ of $\widehat{\mathbb{G}}(t)$, which removes all the edges of $\widehat{\mathbb{G}}(t)$ with weights less than $\underline{\alpha}$. Since $\alpha_i(t)\ge\underline{\alpha}$ and $\gamma_i(t)\ge m_i(t)-n\bar{\kappa}$,  each agent in $\mathbb{N}(t)$ loses at most $n\bar{\kappa}$ incoming edges compared with that of the original graph ${\mathbb{G}(t)}$.

\begin{figure}[t]
 	\vspace{-0.1cm}
 	\centering
 	\includegraphics[width=8 cm]{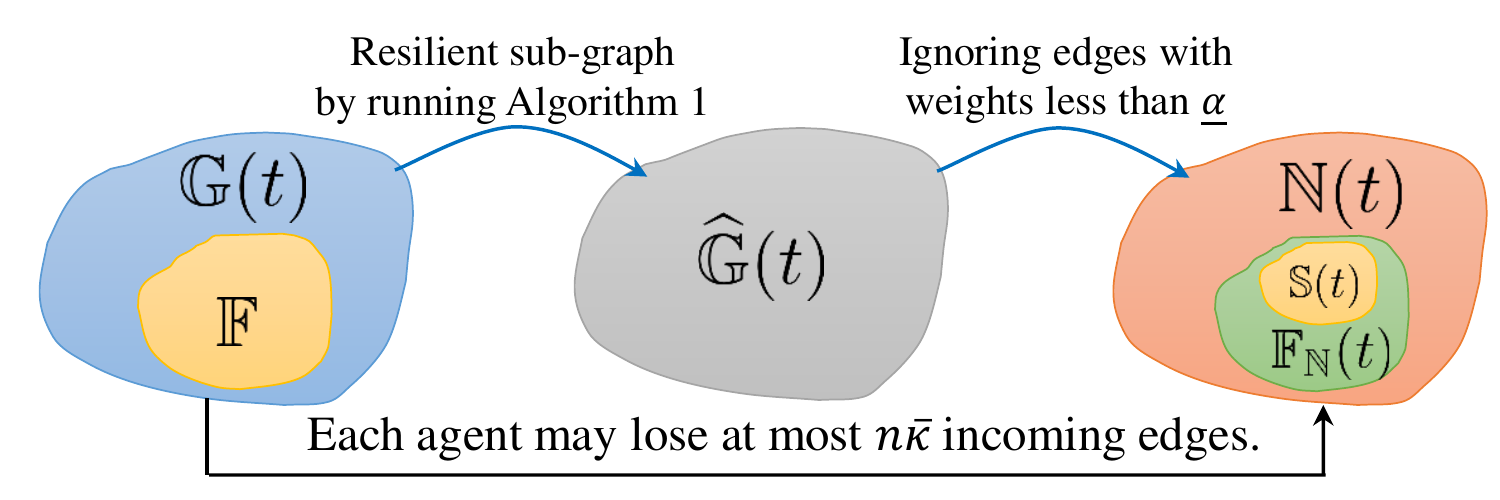}
 	\caption{A demonstration of $\mathbb{G}(t)$, $\widehat{\mathbb{G}}(t)$, $\mathbb{N}(t)$, $\mathbb{F}$, and $\mathbb{F}_{\mathbb{N}}(t)$, where $\mathbb{G}(t)$, $\widehat{\mathbb{G}}(t)$ and $\mathbb{N}(t)$ share the same agent set. $\mathbb{F}$ and $\mathbb{F}_{\mathbb{N}}(t)$ share the same agent set.}
 	\label{Fig_C1_2}
\end{figure}

We start by proving the second statement of Assumption \ref{AS_RD}{a}. This will be done by ensuring a sufficient condition such that for all $t$, a sub-graph of $\mathbb{N}(t)$ always contains a strongly connected component $\mathbb{S}(t)$ with at least $\varphi=f-n\bar{\kappa}$ agents. 
Since  ${\mathbb{G}(t)}$ has a fully connected component $\mathbb{F}$, let $\mathbb{F}_{\mathbb{N}}(t)$ denote the agent-induced sub-graph of $\mathbb{N}(t)$ such that $\mathscr{V}(\mathbb{F}_{\mathbb{N}}(t))=\mathscr{V}(\mathbb{F})$. 
Then $|\mathscr{V}(\mathbb{F}_{\mathbb{N}}(t))|=|\mathscr{V}(\mathbb{F})|=f$, and each agent in $\mathbb{F}_{\mathbb{N}}(t)$ loses at most $n\bar{\kappa}$ incoming edges from $\mathbb{F}$ (this property follows from the relation between $\mathbb{G}(t)$ and $\mathbb{N}(t)$, as visualized in Fig. \ref{Fig_C1_2}). 
As a consequence, every agent in  $\mathbb{F}_{\mathbb{N}}(t)$ has at least $f-n\bar{\kappa}=\varphi$ incoming edges with the agents in $\mathbb{F}_{\mathbb{N}}(t)$ (including self-loops). 
Now, for time step $t$, suppose $\mathbb{F}_{\mathbb{N}}(t)$ has $N(t)$ maximal strongly connected components (disjoint), denoted by $\mathbb{S}_j(t)$, $j\in\{1,\cdots,N(t)\}$. Clearly, there exists at least one root strongly connected component $\mathbb{S}(t)\in\{\mathbb{S}_j(t)\}$ that does not have incoming edges from other components in $\mathbb{F}_{\mathbb{N}}(t)$ \cite{WDB:2001}. Because otherwise there will be loops among $\mathbb{S}_j(t)$, leading to a larger component which contradicts with its definition. 
Based on this, recall the fact that every agent in $\mathbb{F}_{\mathbb{N}}(t)$, as well as every agent in $\mathbb{S}(t)$, has at least $f-n\bar{\kappa}=\varphi$ incoming edges from the agents in $\mathbb{F}_{\mathbb{N}}(t)$. Thus, all incoming edges of agents in $\mathbb{S}(t)$ must come from $\mathbb{S}(t)$ itself (including self-loops). Consequently, $|\mathscr{V}(\mathbb{S}(t))|\ge \varphi$.

In the following, we establish the remaining statements of \ref{AS_RD}\textbf{a}, namely, $\mathbb{N}(t)$ is a rooted (connected) graph and one root of the graph is in $\mathbb{S}(t)$. We start by showing $\mathbb{F}_{\mathbb{N}}(t)$ is a rooted graph. Note that from the above derivation, $\mathbb{S}(t)$ is a strongly connected component with $|\mathscr{V}(\mathbb{S}(t))|\ge \varphi=f-n\bar{\kappa}$. Thus, $|\mathscr{V}(\mathbb{F}_{\mathbb{N}}(t))|-|\mathscr{V}(\mathbb{S}(t))|\le n\bar{\kappa}$, meaning that all agents in $\mathbb{F}_{\mathbb{N}}(t)$ but not in $\mathbb{S}(t)$ must have at least one incoming edge from $\mathbb{S}(t)$. Therefore, $\mathbb{F}_{\mathbb{N}}(t)$ is a rooted graph and one root of the graph is in $\mathbb{S}(t)$.
Now, we show that $\mathbb{N}(t)$ is a rooted graph and one of its roots is in $\mathbb{S}(t)$. 
Recall that any subset of agents in $\mathscr{V}({\mathbb{G}(t)})\setminus \mathscr{V}(\mathbb{F})$ is $(n\bar{\kappa}+1)$-reachable in ${\mathbb{G}(t)}$. Since each agent in $\mathbb{N}(t)$ loses at most $n\bar{\kappa}$ incoming edges from $\mathbb{G}(t)$, any subset of $\mathscr{V}({\mathbb{N}(t)})\setminus \mathscr{V}(\mathbb{F}_{\mathbb{N}}(t))$ is $1$-reachable in ${\mathbb{N}(t)}$. Let us decompose $\mathbb{N}(t)$ into its strongly
connected components. If $\mathbb{N}(t)$ is not a rooted graph, there must exist at least two components, denoted by $\mathbb{P}_1$ and $\mathbb{P}_2$, that have no incoming edges from any other components. Since $\mathbb{F}_{\mathbb{N}}(t)$ is a rooted graph, any agents in $\mathbb{F}_{\mathbb{N}}(t)$ can only be associated with one of such root component, say $\mathbb{P}_1$. Then $\mathbb{P}_2$ must be composed of agents only in $\mathscr{V}({\mathbb{N}(t)})\setminus \mathscr{V}(\mathbb{F}_{\mathbb{N}}(t))$. This, however, contradicts with the fact that any subset of $\mathscr{V}({\mathbb{N}(t)})\setminus \mathscr{V}(\mathbb{F}_{\mathbb{N}}(t))$ is $1$-reachable in ${\mathbb{N}(t)}$ (every subset must has a neighbor outside the set). Thus, $\mathbb{N}(t)$ must be a rooted graph. Further since there can be no root components in $\mathscr{V}({\mathbb{N}(t)})\setminus \mathscr{V}(\mathbb{F}_{\mathbb{N}}(t))$. The root component of ${\mathbb{N}(t)}$ must be in $\mathbb{F}_{\mathbb{N}}(t)$, which is then rooted by agents in $\mathbb{S}(t)$.
This completes the proof. \end{proof}

\smallskip
Note that if Assumption \ref{AS_RD}{a} holds for $\varphi=f-n\bar{\kappa}$, it also holds for any smaller $\varphi$.

\subsection{Special Case: Resilient Unconstrained Consensus}
Note that if the agents in the system are not subject to local constraints, i.e. $\mathcal{X}_i=\mathbb{R}^n,~ \forall i=1,\cdots,m$ and $\mathcal{P}_i(v_i(t))=v_i(t)$, then update \eqref{eq_algCC} degrades to 
\begin{align}\label{eq_algCCS}
	x_i(t+1)
	=v_i(t).
\end{align}
As pointed in \cite{WR05TAC}, for unconstrained consensus, a sufficient condition for the effectiveness of update \eqref{eq_algCCS} is the existence of a uniformly bounded sequence of contiguous time intervals such that in every interval, $\widehat{\mathbb{G}}(t)$ has a rooted directed spanning tree
for at least one time step. Obviously, this connectivity condition is weaker than the one proposed in Sec. \ref{Sec_CCUBA} for the constrained consensus case.
Correspondingly, we introduce a weaker version of Assumption \ref{AS_RD} as follows.

	\begin{definition}\longthmtitle{$r$-robustness \cite{Shreyas13Selected}}
	A directed graph $\mathbb{G}$ is $r$-robust, with $r\in\mathbb{Z}_{\ge0}$, if for every pair of nonempty, disjoint subsets of $\mathbb{G}$,
	at least one of the subsets is $r$-reachable. 
\end{definition}
\begin{assumption} \label{As_UcSc}
	Consider a network $\mathbb{G}(t)$. Suppose $\mathbb{G}(t)$ is $(n\bar{\kappa}+1)$-robust. 
\end{assumption}

This assumption leads to the following lemma, which can be obtained by Lemma 7 of \cite{Shreyas13Selected} or by combining the Lemmas 3 and 4 of \cite{N14ICDCN}:

\begin{lemma}\label{Lm_UcSc}
	For any network  $\mathbb{G}(t)$ satisfying Assumption \ref{As_UcSc}, if each agent in $\mathbb{G}(t)$ removes up to $n\bar{\kappa}$ incoming edges, the obtained edge-induced network is still spanned by a rooted directed tree.
\end{lemma}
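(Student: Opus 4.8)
The plan is to argue by contradiction using the standard characterization of rootedness through the condensation (the directed acyclic graph of strongly connected components). Let $\mathbb{G}'(t)$ denote the edge-induced network obtained after each agent removes up to $n\bar{\kappa}$ of its incoming edges; note that $\mathscr{V}(\mathbb{G}'(t))=\mathscr{V}(\mathbb{G}(t))$. A directed graph is spanned by a rooted directed tree if and only if, in its condensation, there is exactly one strongly connected component with no incoming edges from any other component (a single ``source'' component), since from any vertex of that component every other vertex is reachable. Thus it suffices to show that $\mathbb{G}'(t)$ has exactly one such source component.

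First I would suppose, for contradiction, that $\mathbb{G}'(t)$ is not rooted. Then its condensation has at least two distinct source components; I would pick two of them and call their vertex sets $\mathbb{P}_1$ and $\mathbb{P}_2$. These are nonempty and disjoint, and by construction neither has any incoming edge in $\mathbb{G}'(t)$ from a vertex outside itself. Next I would invoke the $(n\bar{\kappa}+1)$-robustness of the original graph $\mathbb{G}(t)$ (Assumption \ref{As_UcSc}) applied to the disjoint nonempty pair $\mathbb{P}_1,\mathbb{P}_2$ (whose vertices also lie in $\mathbb{G}(t)$). Robustness guarantees that at least one of them, say $\mathbb{P}_1$, is $(n\bar{\kappa}+1)$-reachable in $\mathbb{G}(t)$, i.e., by Definition \ref{def_rreach} there exists $i\in\mathbb{P}_1$ with $|\mathcal{N}_i\setminus\mathbb{P}_1|\ge n\bar{\kappa}+1$, where $\mathcal{N}_i$ is the in-neighbor set taken in $\mathbb{G}(t)$.

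Since agent $i$ removes at most $n\bar{\kappa}$ incoming edges in forming $\mathbb{G}'(t)$, at least $(n\bar{\kappa}+1)-n\bar{\kappa}=1$ of those in-neighbors lying outside $\mathbb{P}_1$ survives. This surviving in-neighbor provides an incoming edge into $\mathbb{P}_1$ from a vertex outside $\mathbb{P}_1$ in $\mathbb{G}'(t)$, contradicting the fact that $\mathbb{P}_1$ is a source component; the same contradiction arises if instead $\mathbb{P}_2$ is the reachable set. Hence $\mathbb{G}'(t)$ cannot have two source components, so it is rooted, which is exactly the claim.

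The main obstacle is really a bookkeeping point rather than a deep difficulty: keeping the edge orientation consistent. The paper's convention is that $j\in\mathcal{N}_i$ means a directed edge from $j$ to $i$, so $\mathcal{N}_i$ is the in-neighbor set, ``removing incoming edges'' shrinks $\mathcal{N}_i$, and $r$-reachability counts in-neighbors lying outside the set. With these conventions aligned, the bound $(n\bar{\kappa}+1)-n\bar{\kappa}\ge 1$ is precisely what forces an incoming edge to survive, which is the heart of the argument. A secondary care point is to state the ``exactly one source component $\Leftrightarrow$ rooted'' equivalence cleanly; I would record it as a short standing fact about condensations, since it is standard and matches the component-based reasoning already used in the proof of Corollary \ref{CO_A1C}.
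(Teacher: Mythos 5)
Your proof is correct. Note, however, that the paper does not actually prove Lemma~\ref{Lm_UcSc} itself: it discharges the claim entirely by citation, stating that it ``can be obtained by Lemma 7 of \cite{Shreyas13Selected} or by combining the Lemmas 3 and 4 of \cite{N14ICDCN}.'' What you have written is a self-contained version of precisely the standard argument underlying those cited results: apply $(n\bar{\kappa}+1)$-robustness (Assumption~\ref{As_UcSc}, via Definition~\ref{def_rreach}) to two hypothetical source components of the condensation, and observe that removing at most $n\bar{\kappa}$ incoming edges per agent cannot destroy all $n\bar{\kappa}+1$ in-edges entering the reachable set from outside, so one source component must in fact have a surviving incoming edge --- a contradiction. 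This is also exactly the component-based reasoning the authors do spell out in their proof of Corollary~\ref{CO_A1C} (the two root components $\mathbb{P}_1$, $\mathbb{P}_2$ there play the same role as yours). So your route is not a genuinely different method; its value is that it makes the lemma self-contained rather than deferred to external references, and your bookkeeping (in-neighbor convention, disjointness of the source components, the worst case where all removed edges point into the set from outside) is handled correctly.
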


Consequently, one has the following.

\begin{theorem}  \label{Th_MUC}
	Consider a network ${\mathbb{G}}(t)$ of normal agents. For ${\mathbb{G}}(t)$, there exists a sequence of uniformly bounded contiguous time intervals such that in each interval at least one ${\mathbb{G}}(t)$ satisfies Assumption \ref{As_UcSc}. Suppose ${\mathbb{G}}(t)$ is affected by a Byzantine attack such that each normal agent in ${\mathbb{G}}(t)$ has at most $\bar{\kappa}$ Byzantine neighbors. Then by replacing the $v_i(t)$ in \eqref{eq_algCCS} with the ${v}_i^\star$ in Algorithm \ref{Algorithm_Main} with $\sigma_i(t)=n\kappa_i(t)+2$, update \eqref{eq_algCCS} will drive the states in all agents of ${\mathbb{G}}(t)$ to reach a consensus exponentially fast.
\end{theorem}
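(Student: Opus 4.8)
The plan is to reduce Theorem \ref{Th_MUC} to the classical convergence theory for backward products of row-stochastic matrices, following the same skeleton as the proof of Theorem \ref{Th_MCC} but exploiting the fact that the unconstrained case eliminates the projection error (so $\mathcal{P}_i$ is the identity and $e_i(t)\equiv 0$) and only requires a rooted spanning tree rather than a strongly connected component of size $\varphi$ together with constraint redundancy.

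First I would invoke Theorem \ref{Th_1} with the prescribed choice $\sigma_i(t)=n\kappa_i(t)+2$. This certifies that, at each time $t$, the substitution of $v_i^\star(t)$ for $v_i(t)$ turns update \eqref{eq_algCCS} into a genuine consensus step $x_i(t+1)=\sum_{\ell\in\mathcal{N}_i(t)}w_{i\ell}^\star(t)\,x_\ell(t)$ over the normal neighbors only, i.e.\ a classical consensus update on the resilient sub-graph $\widehat{\mathbb{G}}(t)$ of Definition \ref{def_hG}, whose weight matrix $\widehat{W}(t)$ is row-stochastic. With this $\sigma_i(t)$, Theorem \ref{Th_1} yields $\gamma_i(t)=d_i(t)-\kappa_i(t)-\sigma_i(t)+2=m_i(t)-n\kappa_i(t)\ge m_i(t)-n\bar{\kappa}$, so each agent retains all but at most $n\bar{\kappa}$ of its normal incoming edges, and each retained weight is at least $\alpha_i(t)\ge\underline{\alpha}>0$, where $\underline{\alpha}=\min_{i,t}\alpha_i(t)$ is strictly positive by the boundedness of $s_i(t)$.

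Next I would use Lemma \ref{Lm_UcSc} to control the connectivity of $\widehat{\mathbb{G}}(t)$. By hypothesis there is a uniformly bounded sequence of contiguous intervals in each of which at least one $\mathbb{G}(t)$ is $(n\bar{\kappa}+1)$-robust (Assumption \ref{As_UcSc}). Since passing from $\mathbb{G}(t)$ to $\widehat{\mathbb{G}}(t)$ discards all Byzantine edges and, by the previous step, removes at most $n\bar{\kappa}$ additional incoming edges per normal agent, Lemma \ref{Lm_UcSc} guarantees that $\widehat{\mathbb{G}}(t)$ is still spanned by a rooted directed tree at each such time step. Hence $\widehat{\mathbb{G}}(t)$ possesses a rooted spanning tree for at least one time step in every interval of this uniformly bounded sequence. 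I would then close the argument with the standard ergodicity theory for backward products of row-stochastic matrices \cite{J63AMS,JH77MAA,KX12RM,WR05TAC}: since each $\widehat{W}(t)$ is row-stochastic with positive self-loops, all nonzero entries are uniformly bounded below by $\underline{\alpha}$, and the union over each interval contains a rooted spanning tree, the product $\widehat{U}(T)=\widehat{W}(T)\cdots\widehat{W}(0)$ converges exponentially fast to $\mathbf{1}_m\widehat{u}^*$, forcing all states $x_i(t)$ to converge exponentially to a common value.

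The main obstacle is really only the second step, namely certifying that the unknown, adversary-dependent edge removals cannot destroy the rooted-tree connectivity of $\widehat{\mathbb{G}}(t)$; this is precisely what $(n\bar{\kappa}+1)$-robustness and Lemma \ref{Lm_UcSc} are engineered to handle, with the key quantitative input being the bound $\gamma_i(t)\ge m_i(t)-n\bar{\kappa}$ supplied by Theorem \ref{Th_1}. The remaining care is bookkeeping over the time-varying $\kappa_i(t)$: one must verify that the edge-loss bound $n\bar{\kappa}$ and the weight lower bound $\underline{\alpha}$ hold \emph{uniformly} in $t$, so that the spanning-tree-per-interval condition and the resulting exponential rate from the matrix-product theorem are genuinely uniform rather than merely pointwise in time.
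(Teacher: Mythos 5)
Your proposal is correct and follows essentially the same route as the paper, whose proof is exactly the one-line combination of Theorem \ref{Th_1} (with $\sigma_i(t)=n\kappa_i(t)+2$, giving $\gamma_i(t)=m_i(t)-n\kappa_i(t)$ and weights bounded below), Lemma \ref{Lm_UcSc} (rooted spanning tree survives the removal of up to $n\bar{\kappa}$ incoming edges per agent), and the classical time-varying consensus result in \cite{WR05TAC}. You have simply filled in the bookkeeping (row-stochasticity of $\widehat{W}(t)$, uniformity of $\underline{\alpha}$, vanishing projection error) that the paper leaves implicit.
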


The proof of this Theorem can directly be obtained by combining the Theorem \ref{Th_1}, Lemma \ref{Lm_UcSc} in this paper and the result in \cite{WR05TAC}.

\begin{remark}
	Note that the results in this subsection can be considered as the special case of Sec. \ref{Sec_RCC}. Particular, compared with Assumption \ref{AS_RD}{a}, unconstrained consensus no longer requires a root strongly connected component such that $|\mathscr{V}(\mathbb{S}(t))|\ge \varphi$, but only $|\mathscr{V}(\mathbb{S}(t))|=1$, which reduced to a rooted spanning tree. Such condition can be guaranteed by $(n\bar{\kappa}+1)$-robustness in Assumption \ref{As_UcSc} based on Lemma \ref{Lm_UcSc}.	
	The constraints redundancy Assumption \ref{AS_RD}{b} is completely removed due to the non-existence of local constraints. 
	In addition, compared with existing results, Theorem \ref{Th_MUC} applies to the resilient unconstrained consensus for multi-dimensional state vector under time-varying networks.  If the network is time-invariant, our result matches the Theorem 1 (under Condition SC) proposed in \cite{N14ICDCN}, and the Theorem 5.1 (under Assumption 5.2) proposed in \cite{PH15IROS}. If the agents' states are scalars, the result matches the Theorem 2 of \cite{Shreyas13Selected}. \hfill$\square$
\end{remark}

\section{Simulation and Application Examples}\label{Sec_SM}
In this section, we validate the effectiveness of the proposed algorithm via both numerical simulations and an application example of safe multi-agent learning. 

\subsection{Resilient Consensus in Multi-agent Systems}
For resilient consensus, consider the following update:
\begin{align}
	x_i(t+1)=\mathcal{P}_i({v}_i^\star(t)),  \label{eq_simu}
\end{align}
where ${v}_i^\star(t)$ is a resilient convex combination that can be obtained from Algorithm \ref{Algorithm_Main} with $x_{\mathcal{N}_i^+}(t)$; $\mathcal{P}_i(\cdot)$ is a projection operator that projects any state to the local constraint $\mathcal{X}_i$. If the constraint does not exist, then $\mathcal{P}_i({v}_i^\star)={v}_i^\star$. Here, we validate resilient consensus problems for both unconstrained and constrained cases.

\subsubsection*{Unconstrained consensus}
Consider a multi-agent system composed of 8 normal agents. Let $n=2$ and $\bar{\kappa}=1$, which means that each agent holds a state $x_i\in\mathcal{R}^2$ and has at most one malicious neighbors. Suppose the communication network of the system is time-varying, but every 3 time steps, at least one graph satisfies the connectivity condition described in Assumption \ref{As_UcSc}. The following Fig. \ref{Fig_Network} is an example of the network at a certain time-step, which is constructed by the \textit{preferential-attachment} method \cite{RA02RMP}. Note that in Fig. \ref{Fig_Network}, we do not draw concrete Byzantine agents as these agents can send different misinformation to different normal agents. Specifically, the red arrows can be considered either different information for one Byzantine agent or different information for different Byzantine agents.
\noindent\begin{figure}[h!]
	\vspace{-0.1cm}
	\centering
	\includegraphics[width=7 cm]{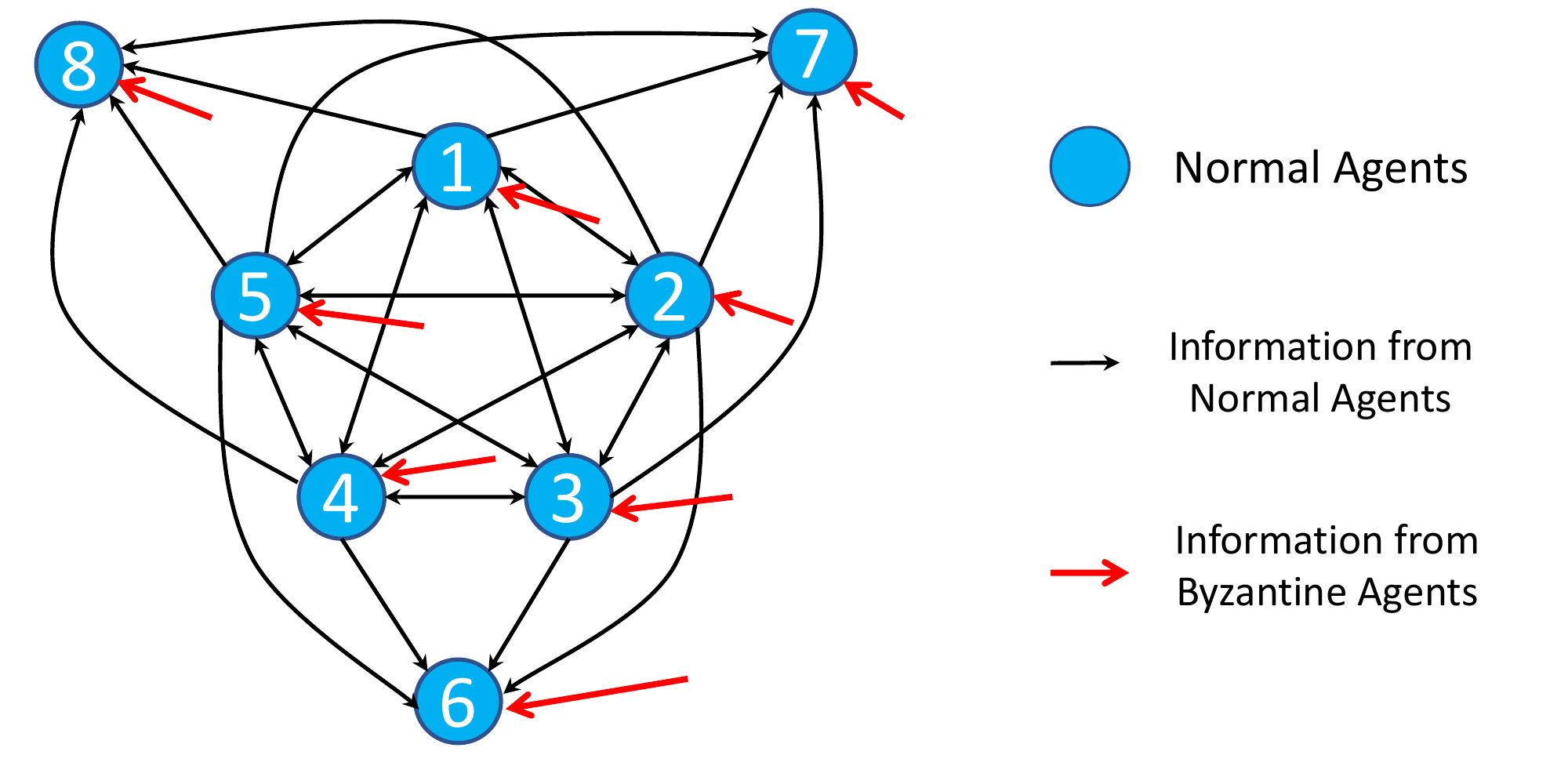}
	\caption{A network of 8 agents, under Byzantine attack.}
	\label{Fig_Network}
\end{figure}

For Fig.~\ref{Fig_Network}, each agent has $d_i=6$ (including the edge from Byzantine agents and the agent's self-loop).  According to Theorem \ref{Th_1}, by taking $\sigma_i=n\kappa_i+2$, each normal agent can locally compute a $(\gamma_i,\alpha_i)$-resilient convex combination such that for all $i=1,\cdots,8$, there holds $\gamma_i=d_i-\kappa_i-\sigma_i+2=3$. Note that the network in Fig. \ref{Fig_Network} satisfies the condition described in Assumption \ref{As_UcSc}. Although Assumption \ref{AS_RD} \textbf{a} is not required for unconstrained consensus, it is worth mentioning that Fig. \ref{Fig_Network} also satisfies the condition described in Corollary \ref{CO_A1C} with a fully connected component of $5$ agents. This leads to the hold of Assumption \ref{AS_RD} \textbf{a}. 

Initialize the normal agents' states $x_i(0)$ by vectors randomly chosen from  $[-2, 2]\times[-2, 2]$. At each times step, let the malicious agents send a state to normal agents, which is randomly chosen from $[-4, 4]\times[-5, 5]$. By running update \eqref{eq_simu}, one has the following simulation result.
\noindent\begin{figure}[h!]
	\vspace{-0.1cm}
	\centering
	\includegraphics[width=0.5\textwidth]{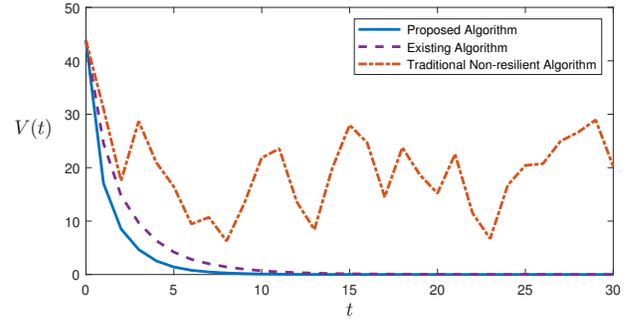}
	\caption{The algorithm proposed in this paper v.s. the result of \cite{HMNV15DC} for unconstrained consensus v.s. tradition non-resilient consensus algorithm.}
	\label{Fig_UCSimu}
\end{figure}

\noindent In the figure, $V(t)\triangleq\sum_{i=1}^m\sum_{j=1}^m \|x_i(t)-x_j(t)\|_2^2$
measures the closeness among all the normal neighbors. $V(t)=0$ if and only if a consensus is reached. It can be seen that compared with traditional non-resilient consensus update \eqref{eq_algCC}, where the states of the normal agents will be misled by their Byzantine neighbors, the proposed approach allows all the normal agents to achieve a resilient consensus. 
{Additionally, we note that the topology in Fig.~\ref{Fig_Network} also satisfies the network condition in \cite{HMNV15DC}. The algorithm in \cite{HMNV15DC} achieves resilient consensus with a slightly different convergence rate. The improved convergence rate of our algorithm might be attributed to the special use of agents' own states, as we have discussed in Remark \ref{Rm_2}.}

\subsubsection*{Constrained consensus}
It has been shown in Sec.~\ref{Sec_RCC} that a resilient constrained consensus requires both network and constraint redundancy, in this simulation, we employ a network $\mathbb{G}(t)$ of $m=30$ agents. Let $n=2$. Suppose ${\mathbb{G}}(t)$ is affected by Byzantine attack such that each normal agent in ${\mathbb{G}}(t)$ has at most $\bar{\kappa}=2$ malicious neighbors. By choosing $\varphi=11$, we construct the network such that ${\mathbb{G}}(t)$ has a fully connected component of $f=15$ agents, and any other agents of ${\mathbb{G}}(t)$ have at least $6$ neighbors in the fully connected component. This setup guarantees all conditions in Corollary \ref{CO_A1C} hold, i.e., $\varphi=f-n\bar{\kappa}$ and $f\ge2n\bar{\kappa}+1$.  Thus, the network redundancy in Assumption \ref{AS_RD} \textbf{a} must be satisfied.
To continue, for all the agents, define their local constraints as follows
\begin{align*}
	\mathcal{X}_i&= \{x|~\|x\|_2\le 1,~l_1\cdot x\ge 0\}~~\text{for}~ i=1,\cdots,10\\
	\mathcal{X}_i&= \{x|~l_1\cdot x\ge 0,~l_2\cdot x\ge 0\}~~\text{for}~ i=11,\cdots,20\\
	\mathcal{X}_i&= \{x|~l_2\cdot x\ge 0,~ \|x\|_2\le 1\}~~\text{for}~ i=21,\cdots,30
\end{align*} 
where $l_1=[1~-1]$, $l_2=[1~~0]$.
Clearly, the intersection of constraints of any $11$ agents equals to that of all the agents. Thus, the constraint redundancy in Assumption \ref{AS_RD} \textbf{b} is also satisfied for $\varphi=11$.

Under the above setup, the simulation result for this example is shown in Fig. \ref{Fig_CCSimu}.
\noindent\begin{figure}[h!]
	\vspace{-0.1cm}
	\centering
	\includegraphics[width=0.5\textwidth]{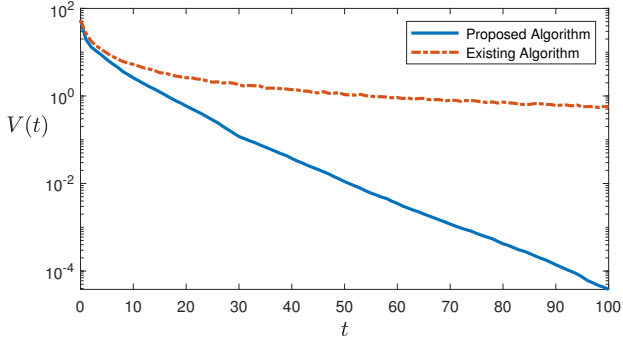}
	\caption{The algorithm proposed in this paper v.s. the result of \cite{XSS18NACO} for constrained consensus.}
	\label{Fig_CCSimu}
\end{figure}
Particularly, one has $x_i=x^*$ for all $i=1,\cdots,30$, where $x^*=[ 0.7071 ~0.7071]^\top$. Such $x^*$ satisfies all the local constraints. Furthermore, by the $log(\cdot)$ scale on $Y$ axis, one can observe that compared with the result of \cite{XSS18NACO}, the approach proposed in this paper has an exponential convergence rate. 

\subsection{Safe Multi-agent learning based on Resilient Consensus}
This subsection provides an example by employing the resilient convex combination into an application of safe multi-agent learning. 
Consensus-based multi-agent learning algorithms \cite{KZH18ICML} allow multiple agents to cooperatively handle complex machine learning tasks in a distributed manner. 
By combining the idea of consensus and machine learning techniques, the system can synthesize agents' local data samples in order to achieve a globally optimal model.
However, similar to other consensus-based algorithms, multi-agent learning algorithms are also vulnerable to malicious attacks. These attacks can either be caused by the presence of Byzantine agents or normal functioning agents with corrupted data.

Here, we consider a multi-agent value-function approximation problem, which is a fundamental component in both supervised learning algorithms \cite{A19ABC} (e.g. Regression, SVM) and reinforcement learning algorithms \cite{RA18MIT} (e.g. Deep $Q$-learning, Actor-Critic, Proximal Policy Optimization). Suppose we have the following value function $Q_{\theta}(z)$, 
\begin{align} \label{eq_MALval}
	Q_{\theta}(z)=\psi(z)^{\top}\theta 
\end{align}
where $\theta \in\mathbb{R}^n$ is the parameter vector, $\psi(\cdot)\in\mathbb{R}^n$ is the feature vector, and $z$ is the model input, which can be the labeled data in supervised learning, or the state action pairs in reinforcement learning. 
Under the multi-agent setup, given the 8-agent network shown in Fig. \ref{Fig_Network}, suppose each agent $i$ has local data sample pairs $(Q_{i,k},z_{i,k})$, $k=1, 2,\cdots$, which are subject to measurement noises. The goal is to find the optimal $\theta^\star$ parameter vector such that
\begin{align} \label{eq_MALprob}
	\min_{\theta}~\sum_{i} \sum_{k}\left\|Q_{\theta}(z_{i,k})-Q_{i,k}\right\|^2, \quad \text{s.t.}~\|\theta\|_2\le \rho
\end{align}
where the regularization constraint avoids overfitting \cite{PHS20ACC}. 

During the communication process, malicious parameter information is received by the normal agents through the red arrows in Fig. \ref{Fig_Network}. Thus, similar to Fig. \ref{Fig_UCSimu}, traditional multi-agent learning algorithms cannot solve \eqref{eq_MALprob} in a resilient way. Motivated by this, we introduce the following algorithm 
\begin{align}
	\theta_i(t+1)=\mathcal{P}_i\left({v}_i^\star(t)-\eta(t)g_{i,k}(\theta_i(t))   \right),  \label{eq_MAL}
\end{align}
which replace the consensus vector in traditional multi-agent learning algorithms \cite{AAP10TAC} with a resilient convex combination developed in this paper. Particularly, in \eqref{eq_MAL}, ${v}_i^\star(t)$ is the resilient convex combination of $\theta_i(t)$ computed by Algorithm \ref{Algorithm_Main} with $\sigma_i=n\kappa_i+2$; $g_{i,k}(\theta_i(t))$ is the stochastic gradient corresponding to a random sample pair $(Q_{i,k},z_{i,k})$ of agent $i$:
\begin{align*}
	g_{i,k}(\theta_i(t))&=\nabla_{\theta_i}\left\|Q_{\theta_i}(z_{i,k})-Q_{i,k}\right\|^2 \\
		&=\left(Q_{\theta_i(t)}(z_{i,k})-Q_{i,k}\right)\psi(z_{i,k}),
\end{align*}
the projection operator $\mathcal{P}_i(\theta)$ ensures the satisfactory of regularization constraints:
\begin{align*}
	\mathcal{P}_i(\theta)=\underset{\tilde\theta:\ \|\tilde\theta\|\le \rho}{\arg\min}~\|\tilde\theta-\theta\|,
\end{align*}
and $\eta(t)=\frac{1}{t}$ is a diminishing learning rate that is commonly used in stochastic gradient descent \cite{AAP10TAC}. 
By defining 
$$W(t)\triangleq\sum_{i=1}^m\|\theta_i(t)-\theta^*\|_2^2$$
and running update \eqref{eq_MAL}, one obtains the result shown in \eqref{Fig_MAL}. One can observe that in the presence of Byzantine attacks, the system can perform safe multi-agent learning, i.e., the $\theta_i(t)$ for all normal  agents converge to the desired parameter $\theta^*$.
\noindent\begin{figure}[h!]
	\vspace{-0.1cm}
	\centering
	\includegraphics[width=0.5\textwidth]{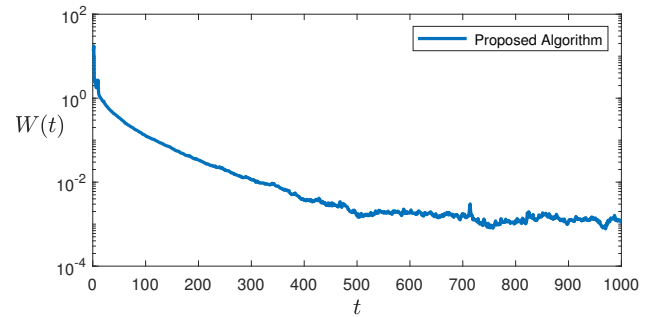}
	\caption{Multi-agent machine learning based on the resilient constrained consensus approach proposed in this paper.}
	\label{Fig_MAL}
\end{figure}

\section{Conclusion}\label{Sec_CL}
This paper aims to achieve resilient constrained consensus for multi-agent systems in the presence of Byzantine attacks. 
{We started by introducing the concept of $(\gamma_i,\alpha_i)$-resilient convex combination and proposed a new approach (Algorithm \ref{Algorithm_Main}) to compute it.} We showed that the normal agents in a multi-agent system can use their local information to automatically isolate the impact of their malicious neighbors. 
Given a fixed state dimension and an upper bound for agents' malicious neighbors, the computational complexity of the algorithm is polynomial in the network degree of each agent. Under sufficient redundancy conditions on the network topology (Assumption \ref{AS_RD} \textbf{a}) and constraints (Assumption \ref{AS_RD} \textbf{b}), the proposed algorithm can be employed to achieve resilient constrained distributed consensus with exponential convergence rate. Since the network redundancy condition (Assumption \ref{AS_RD} \textbf{a}) is difficult to directly verify, we proposed a sufficient condition (Corollary \ref{CO_A1C}), which offers an easy way to design a network satisfying the network redundancy condition. 
We have validated the correctness of all the proposed results by theoretical analysis.
We also illustrated the effectiveness of the proposed algorithms by numerical simulations, with applications in un-constrained consensus, constrained consensus, and multi-agent safe multi-agent learning.
Our future work includes further simplifying the computation complexity of the proposed algorithm and studying the trade-off between resilience and computational complexity.





%
%
%
%
%
%
%
%
%
%

\section*{Appendix}
\subsection*{Proof of Lemma \ref{LM_ERW}}
In order to prove Lemma \ref{LM_ERW}, we first introduce the following lemmas.
\begin{lemma}\label{L_PI}
	If $x_i\notin\mathcal{H}(x_{\widehat{\mathcal{S}}_{ij}})$,  then
	\begin{align}\label{HXomega}
		\bigcap_{k\in\Omega(n)}\mathcal{H}(x_{\mathcal{B}_{ijk}})\setminus \{x_i\}\neq \emptyset,
	\end{align}
	where $\Omega(n)$ is any subset of $\{1,\cdots,b_i\}$ with $|\Omega(n)|=n$.
\end{lemma}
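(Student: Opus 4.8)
The plan is to avoid all geometric machinery and instead produce an explicit common point by a pigeonhole count on the index sets. The starting observation is structural: each $\mathcal{B}_{ijk}$ is obtained from $\mathcal{S}_{ij}$ by retaining agent $i$ and deleting exactly $\kappa_i$ of the $\sigma_i-1$ agents in $\widehat{\mathcal{S}}_{ij}$, since $i\in\mathcal{B}_{ijk}$ and $|\mathcal{B}_{ijk}|=\sigma_i-\kappa_i$ forces $|\mathcal{S}_{ij}\setminus\mathcal{B}_{ijk}|=\kappa_i$. So to each index $k$ I would attach the deleted set $E_k=\widehat{\mathcal{S}}_{ij}\setminus\mathcal{B}_{ijk}\subset\widehat{\mathcal{S}}_{ij}$ with $|E_k|=\kappa_i$.

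First I would bound the total deletions over the chosen indices. For a fixed $n$-subset $\Omega(n)\subset\{1,\dots,b_i\}$ the union $\bigcup_{k\in\Omega(n)}E_k$ has cardinality at most $n\kappa_i$. Because $\sigma_i\in[\,n\kappa_i+2,\ d_i-\kappa_i\,]$ gives $|\widehat{\mathcal{S}}_{ij}|=\sigma_i-1\ge n\kappa_i+1$, the set of agents kept by \emph{every} $\mathcal{B}_{ijk}$ with $k\in\Omega(n)$, namely $\widehat{\mathcal{S}}_{ij}\setminus\bigcup_{k\in\Omega(n)}E_k$, is nonempty; I pick any index $\ell^\ast$ in it. This is precisely the step where the lower bound on $\sigma_i$ is consumed.

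Then I would verify that $x_{\ell^\ast}$ is the desired common point. Since $\ell^\ast\in\mathcal{B}_{ijk}$ for every $k\in\Omega(n)$, the vertex $x_{\ell^\ast}$ lies in $\mathcal{H}(x_{\mathcal{B}_{ijk}})$ for each such $k$, so $x_{\ell^\ast}\in\bigcap_{k\in\Omega(n)}\mathcal{H}(x_{\mathcal{B}_{ijk}})$. It remains to check $x_{\ell^\ast}\neq x_i$: as $\ell^\ast\in\widehat{\mathcal{S}}_{ij}$ we have $x_{\ell^\ast}\in\mathcal{H}(x_{\widehat{\mathcal{S}}_{ij}})$, whereas by hypothesis $x_i\notin\mathcal{H}(x_{\widehat{\mathcal{S}}_{ij}})$; hence the two points are distinct and $x_{\ell^\ast}\in\bigl(\bigcap_{k\in\Omega(n)}\mathcal{H}(x_{\mathcal{B}_{ijk}})\bigr)\setminus\{x_i\}$, which is the assertion of \eqref{HXomega}.

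There is no serious obstacle in this argument; the only two points needing care are the counting (the single place the bound $\sigma_i\ge n\kappa_i+2$ is used) and the role of the hypothesis, which enters \emph{solely} to certify that the common vertex $x_{\ell^\ast}$ is genuinely distinct from $x_i$ rather than to produce it. I would also flag the payoff for \ref{LM_ERW}\textbf{a}: the point I construct additionally satisfies $x_{\ell^\ast}\in\mathcal{H}(x_{\widehat{\mathcal{S}}_{ij}})$, so in fact any $n$ of the hulls $\mathcal{H}(x_{\mathcal{B}_{ijk}})$ \emph{together with} $\mathcal{H}(x_{\widehat{\mathcal{S}}_{ij}})$ share a point. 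Combined with the trivial fact that every collection of the $\mathcal{H}(x_{\mathcal{B}_{ijk}})$ contains $x_i$, this verifies the $(n+1)$-wise intersection hypothesis needed to apply Helly's theorem in $\mathbb{R}^n$ and conclude the nonemptiness in \ref{LM_ERW}\textbf{a}.
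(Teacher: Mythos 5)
Your proof is correct, and its combinatorial core is the same as the paper's: both arguments reduce \eqref{HXomega} to the fact that, because $\sigma_i\ge n\kappa_i+2$, any $n$ of the sets $\mathcal{B}_{ijk}$ must share an agent $\ell^*\neq i$, whose state is then a point of every hull. The difference is in how the count is organized. The paper argues by contradiction at the level of \emph{states}: if $x_i$ were the only common state, then $\sum_{k\in\Omega(n)}|x_{\mathcal{B}_{ijk}}|$ could be at most $n+(n-1)(\sigma_i-1)$, contradicting $\sum_{k\in\Omega(n)}|x_{\mathcal{B}_{ijk}}|=n(\sigma_i-\kappa_i)$ once $\sigma_i\ge n\kappa_i+2$. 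You instead run the complementary count directly at the level of \emph{agents}, via the union bound $\bigl|\bigcup_{k\in\Omega(n)}E_k\bigr|\le n\kappa_i<\sigma_i-1=|\widehat{\mathcal{S}}_{ij}|$ on the deleted sets $E_k=\widehat{\mathcal{S}}_{ij}\setminus\mathcal{B}_{ijk}$, and exhibit the witness constructively. This buys three genuine, if modest, improvements. First, counting agents rather than states makes the argument immune to the possibility that two distinct agents hold identical state vectors, which would upset the paper's bookkeeping step $|x_{\mathcal{B}_{ijk}}|=|\mathcal{B}_{ijk}|$. Second, you isolate exactly where the hypothesis $x_i\notin\mathcal{H}(x_{\widehat{\mathcal{S}}_{ij}})$ enters: not to produce the witness, but to certify that $x_{\ell^*}$ and $x_i$ are distinct points, a step the paper asserts without justification. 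Third, your closing observation that the witness also lies in $\mathcal{H}(x_{\widehat{\mathcal{S}}_{ij}})$, combined with the fact that $x_i$ lies in every $\mathcal{H}(x_{\mathcal{B}_{ijk}})$, verifies the $(n+1)$-wise intersection hypothesis for the family $\{\mathcal{H}(x_{\mathcal{B}_{ijk}})\}_{k=1}^{b_i}\cup\{\mathcal{H}(x_{\widehat{\mathcal{S}}_{ij}})\}$, so Helly's theorem gives Lemma \ref{LM_ERW}a in one stroke (note $b_i\ge n+1$ whenever $\kappa_i\ge1$, so the family is large enough, and the case $\kappa_i=0$ is trivial). The paper takes a longer route to Lemma \ref{LM_ERW}a: a Radon-partition induction (Lemma \ref{L_Bij}) establishing the intermediate fact \eqref{eq_Bijknonept}, followed by a supporting-hyperplane argument on the vertices of $\bigcap_{k=1}^{b_i}\mathcal{H}(x_{\mathcal{B}_{ijk}})$; your Helly shortcut bypasses both, at the cost of not producing that intermediate statement.
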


\begin{proof}
By definition, $\forall k\in\{1,\cdots,b_i\}$, one has $x_i\in x_{\mathcal{B}_{ijk}}$.
Thus, \eqref{HXomega} holds if there exists another common state, say $x_\ell\neq x_i$, such that $x_\ell\in x_{\mathcal{B}_{ijk}}$, $\forall k\in\Omega(n)$. In the following, we show such $x_\ell$ always exists by contradiction. 

Suppose $x_i$ is the only state shared by sets $x_{\mathcal{B}_{ijk}}$, $\forall k\in\Omega(n)$.  {Since $|\Omega(n)|=n$ and  $\mathcal{B}_{ijk}\subset{\mathcal{S}}_{ij}$, the overall number of states in sets $x_{\mathcal{B}_{ijk}}$ must satisfy
\begin{align}\label{eq_BSne}
	\sum_{k\in\Omega(n)}|x_{\mathcal{B}_{ijk}}|\le n+(n-1)(|{\mathcal{S}}_{ij}|-1).
\end{align}
In the right hand side of \eqref{eq_BSne}, the first term indicates that $x_i$ appear $n$ times as elements in sets $x_{\mathcal{B}_{ijk}}$, $\forall k\in\Omega(n)$. The second term indicates that any states other than $x_i$ in $x_{\mathcal{S}_{ij}}$ may appear at most $n-1$ times as elements in sets $x_{\mathcal{B}_{ijk}}$, $\forall k\in\Omega(n)$.}
To continue, recall that $|\Omega(n)|=n$,
$|x_{\mathcal{B}_{ijk}}|=|\mathcal{B}_{ijk}|=\sigma_i-\kappa_i$, $|{\mathcal{S}}_{ij}|=\sigma_i$, and $\sigma_i\ge (n\kappa_i + 2)$, then
\begin{align}\label{eq_BSne2}
	&\sum_{k\in\Omega(n)}|x_{\mathcal{B}_{ijk}}|- n+(n-1)(|{\mathcal{S}}_{ij}|-1)\nonumber\\	=&~n(\sigma_i-\kappa_i)-(n-1)\sigma_i\nonumber\\
	=&~\sigma_i-n\kappa_i\nonumber\\
	\ge&~ n\kappa_i + 2-n\kappa_i=2.
\end{align}
Clearly, \eqref{eq_BSne} contradicts with \eqref{eq_BSne2}. Thus, there are at least two common states shared by sets $x_{\mathcal{B}_{ijk}}$, $\forall k\in\Omega(n)$. Therefore, \eqref{HXomega} must hold.
\end{proof}

\smallskip
\begin{lemma}\label{L_Bij}
	If $x_i\notin\mathcal{H}(x_{\hat{\mathcal{S}}_{ij}})$,  then
	\begin{align}\label{eq_Bijknonept}
		\left(\bigcap_{k=1}^{b_i}\mathcal{H}(x_{\mathcal{B}_{ijk}})\right)\setminus \{x_i\}\neq \emptyset.
	\end{align}
\end{lemma}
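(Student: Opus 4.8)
The plan is to combine a strict separation argument with Helly's theorem, using Lemma \ref{L_PI} to verify Helly's hypothesis. Write $C_k \triangleq \mathcal{H}(x_{\mathcal{B}_{ijk}})$ for $k\in\{1,\dots,b_i\}$. Since $i\in\mathcal{B}_{ijk}$, every $C_k$ contains $x_i$, so the whole task is to produce a common point \emph{other than} $x_i$, i.e.\ to show $\bigcap_{k=1}^{b_i}C_k$ is strictly larger than $\{x_i\}$.

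First I would exploit the hypothesis $x_i\notin\mathcal{H}(x_{\widehat{\mathcal{S}}_{ij}})$. Because $\widehat{\mathcal{S}}_{ij}$ is finite, strict separation applies: there is a hyperplane $H$ with $x_i$ strictly on one side and every non-$i$ vertex $x_\ell$, $\ell\in\widehat{\mathcal{S}}_{ij}$, strictly on the other side. I then slice each hull by $H$, obtaining convex sets $C_k\cap H$ contained in the $(n-1)$-dimensional affine space $H\cong\mathbb{R}^{n-1}$.

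The core step is to verify the Helly hypothesis in dimension $n-1$, namely that every $n$ of the slices $C_k\cap H$ meet. Fix any index set $\Omega$ with $|\Omega|=n$. By Lemma \ref{L_PI}, and more precisely by its proof, which delivers an actual shared \emph{vertex}, the sets $x_{\mathcal{B}_{ijk}}$ for $k\in\Omega$ have a common element $x_\ell\neq x_i$, so $x_\ell\in\bigcap_{k\in\Omega}C_k$. As a non-$i$ vertex, $x_\ell$ lies strictly on the far side of $H$ while $x_i$ lies strictly on the near side; by convexity the segment $[x_i,x_\ell]\subseteq\bigcap_{k\in\Omega}C_k$ therefore crosses $H$ at a point lying in $\bigcap_{k\in\Omega}(C_k\cap H)$. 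Hence every $n$-subfamily of the $b_i$ slices has a common point, and Helly's theorem in $H\cong\mathbb{R}^{n-1}$ (which requires exactly every $n=(n-1)+1$ of them to intersect) yields a point $q^\star\in\bigcap_{k=1}^{b_i}(C_k\cap H)$. Since $q^\star\in H$ we have $q^\star\neq x_i$, while $q^\star\in\bigcap_{k=1}^{b_i}C_k$, which establishes \eqref{eq_Bijknonept}.

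I expect the main obstacle to be the hyperplane-crossing step. The bare statement of Lemma \ref{L_PI} only guarantees \emph{some} common point distinct from $x_i$, and such a point could a priori sit on the $x_i$-side of $H$, in which case the segment from $x_i$ need not reach $H$. The clean resolution is to use that the common point can be taken to be a genuine shared data vertex $x_\ell$, as produced inside the proof of Lemma \ref{L_PI}; being a non-$i$ vertex, it necessarily lies on the far side of $H$. A secondary item is the dimension bookkeeping: the match between $|\Omega|=n$ in Lemma \ref{L_PI} and the Helly threshold $n=(n-1)+1$ is exactly what makes the argument close, and it remains valid in the degenerate case $n=1$, where $H$ is a single point lying strictly between $x_i$ and the remaining vertices.
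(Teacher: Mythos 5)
Your proof is correct, but it follows a genuinely different route from the paper's. The paper proves Lemma \ref{L_Bij} by induction on the number of hulls: assuming every $q$ of the sets $\mathcal{H}(x_{\mathcal{B}_{ijk}})$ share a point other than $x_i$, it selects such points $y_h$ for the $q+1$ subfamilies of size $q$, applies Radon's theorem to $\{y_h\}\cup\{x_i\}$, and shows the resulting Radon point $y^*$ lies in all $q+1$ hulls and differs from $x_i$, where the last step again invokes $x_i\notin\mathcal{H}(x_{\widehat{\mathcal{S}}_{ij}})$. You instead use that same hypothesis once, up front, to build a strictly separating hyperplane $H$, reduce to dimension $n-1$, and invoke Helly's theorem there. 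Since Helly is classically derived from Radon by exactly the induction the paper carries out, your argument packages the paper's inductive machinery into a single black-box citation; it is shorter, and it makes the ``avoid $x_i$'' issue automatic (any point of $H$ is distinct from $x_i$), whereas the paper needs a separate, and somewhat glossed, argument that its Radon point satisfies $y^*\neq x_i$. The price is that you need the sharper, shared-vertex form of Lemma \ref{L_PI}, available only from its proof: the bare statement yields a common point that could sit on the $x_i$-side of $H$, where your segment-crossing step would fail; you correctly flag and resolve this, and the counting argument inside Lemma \ref{L_PI} indeed produces a common \emph{vertex} $x_\ell$ with $\ell\in\widehat{\mathcal{S}}_{ij}$ (note also that $x_\ell\neq x_i$ as vectors, since $x_\ell\in\mathcal{H}(x_{\widehat{\mathcal{S}}_{ij}})$ while $x_i$ is not). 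One caveat applies equally to both proofs: Helly in $\mathbb{R}^{n-1}$, like the paper's induction whose base case is $q=n$, needs $b_i\ge n$; this holds whenever $\kappa_i\ge 1$ because $b_i=\binom{\sigma_i-1}{\kappa_i}\ge\binom{n\kappa_i+1}{\kappa_i}\ge n\kappa_i+1$, and for $\kappa_i=0$ the lemma is trivial since then $b_i=1$ and any point of $\mathcal{H}(x_{\widehat{\mathcal{S}}_{ij}})\subset\mathcal{H}(x_{\mathcal{S}_{ij}})$ already differs from $x_i$.
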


\begin{proof}
Based on Lemma \ref{L_PI}, we prove Lemma \ref{L_Bij} by induction. 
Define subsets $\Omega(q) \subset \{1,\cdots,b_i\}$, where $|\Omega(q)|=q$ and $q$ is an  integer such that  $n\le q\le b_i$. By Lemma \ref{L_PI}, when $q=n$, one has 
\begin{align}\label{HXomegab}
	\bigcap_{k\in\Omega(q)}\mathcal{H}(x_{\mathcal{B}_{ijk}})\setminus \{x_i\}\neq \emptyset.
\end{align}
{Now, given any $\Omega(q+1)$, define $\Omega_h(q)=\Omega(q+1)\setminus \{h\}$ with $h\in\Omega(q+1)$.
Since $|\Omega_h(q)|=q$, \eqref{HXomegab} yields
\begin{align}\label{HXomegabh}
	\bigcap_{k\in\Omega_h(q)}\mathcal{H}(x_{\mathcal{B}_{ijk}})\setminus \{x_i\}\neq \emptyset,\quad \forall h\in\Omega(q+1).
\end{align}
Consequently, $\forall h\in\Omega(q+1)$, there must exist points $y_h$, such that 
\begin{align}\label{defy_h}
	y_h\in\!\!\! \bigcap_{k\in\Omega_h(q)}\!\!\!\mathcal{H}(x_{\mathcal{B}_{ijk}})\setminus \{x_i\}.
\end{align}

In the following, we use contradiction to prove
\begin{align}\label{HXomegabq+1}
	\bigcap_{h\in\Omega(q+1)}\mathcal{H}(x_{\mathcal{B}_{ijh}})\setminus \{x_i\}\neq \emptyset.
\end{align}
We first assume \eqref{HXomegabq+1} does not hold, i.e., the set in \eqref{HXomegabq+1} is an empty set. Since $\Omega(q+1)=\Omega_h(q)\bigcup \{h\}$, this assumption leads to, $\forall {h}\in\Omega(q+1)$, 
\begin{align}
	\left(\bigcap_{k\in\Omega_h(q)}\!\!\!\mathcal{H}(x_{\mathcal{B}_{ijk}})\right)\bigcap \mathcal{H}(x_{\mathcal{B}_{ijh}})\setminus \{x_i\}= \emptyset.
\end{align}
Together with \eqref{defy_h}, one has $\forall {h}\in\Omega(q+1)$, 
\begin{subequations}
\begin{align}\label{defy_hh1}
	y_{{h}}&\notin\mathcal{H}(x_{\mathcal{B}_{ij{h}}})\\
		y_h&\in \mathcal{H}(x_{\mathcal{B}_{ij\widehat{h}}}),\quad \forall \widehat{h} \in\Omega(q+1),~ \widehat{h}\neq h.	\label{defy_hh2}
\end{align}
\end{subequations}
To continue, construct a set $\mathcal{Z}=\{y_h,~{h}\in\Omega(q+1)\}\bigcup\{x_i\}$ with $|\mathcal{Z}|= q+2$. Since $q\ge n$, one has $|\mathcal{Z}|\ge n+2$. Then from Radon's Theorem \cite{Tverberg66}, $\mathcal{Z}$ can always be partitioned into two sets $\mathcal{Z}_1$ and $\mathcal{Z}_2$, whose convex hulls intersect, i.e., 
\begin{align}\label{eq_defystar}
	y^*\in\mathcal{H}(\mathcal{Z}_1)\bigcap \mathcal{H}(\mathcal{Z}_2).
\end{align} 
Since $\mathcal{Z}_1$ and $\mathcal{Z}_2$ are disjoint, 
for any ${h}\in\Omega(q+1)$, $y_{{h}}$ only exists in one of them. Without loss of generality, let $\mathcal{Z}_e(h)\in\{\mathcal{Z}_1,\mathcal{Z}_2\}$  represent the subset such that $y_{{h}}\notin \mathcal{Z}_e(h)$. 
Then, based on the definition of $\mathcal{Z}$, all elements in $\mathcal{Z}_e(h)$ are either $y_{\widehat{h}}$ with $\widehat{h}\neq h$ or $x_i$. Furthermore, we have $y_{\widehat{h}}\in\mathcal{H}(x_{\mathcal{B}_{ij{h}}})$, $\widehat{h}\neq h$ and $x_i\in\mathcal{H}(x_{\mathcal{B}_{ij{h}}})$, where the first statement is derived from \eqref{defy_hh2} by exchanging the indices of $h$ and $\widehat{h}$; the second statement is by the definition of the set $x_{\mathcal{B}_{ij{h}}}$. Consequently, $\mathcal{H}(\mathcal{Z}_e(h))\subset\mathcal{H}(x_{\mathcal{B}_{ij{h}}})$. This, together with $\mathcal{Z}_e(h)\in\{\mathcal{Z}_1,\mathcal{Z}_2\}$ and \eqref{eq_defystar}, yields
\begin{align}\label{ystarinH}
	y^*\in\mathcal{H}(x_{\mathcal{B}_{ijh}}),\quad \forall  h\in\Omega(q+1).
\end{align}
Finally, recall that $x_i\notin\mathcal{H}(x_{\hat{\mathcal{S}}_{ij}})$ and $\widehat{\mathcal{S}}_{ij}={\mathcal{S}}_{ij}\setminus \{i\}$. 
Thus, $x_i\notin\mathcal{H}(\{y_h,~h=1,\cdots, (q+1)\})$, that is, $x_i\notin\mathcal{H}(\mathcal{Z}\setminus \{x_i\})$.
This further leads to $y^*\neq x_i$.} We have
\begin{align}\label{eq_ystartnnp}
	y^*\in \bigcap_{h\in\Omega(q+1)}\mathcal{H}(x_{\mathcal{B}_{ijh}})\setminus \{x_i\}\neq \emptyset.
\end{align}
Equation \eqref{eq_ystartnnp} clearly contradicts with the assumption that \eqref{HXomegabq+1} is false. We conclude that given \eqref{HXomegab}, equation \eqref{HXomegabq+1} must hold. Finally, by induction, when $q+1=b_i$, one has $\Omega(b_i) = \{1,\cdots,b_i\}$, that is,
\begin{align}
	\left(\bigcap_{k=1}^{b_i}\mathcal{H}(x_{\mathcal{B}_{ijk}})\right)\setminus \{x_i\}\neq \emptyset.
\end{align}
This completes the proof.
\end{proof}

\smallskip
\subsubsection{Proof of Lemma \ref{LM_ERW}a}
From the definition of convex hull, we know each $\mathcal{H}(x_{\mathcal{B}_{ijk}})$ is a convex polytope in $\mathbb{R}^n$. Since $x_i\notin \mathcal{H}(x_{\widehat{\mathcal{S}}_{ij}})$ and ${\mathcal{B}_{ijk}}\subset {\mathcal{S}_{ij}}=\widehat{\mathcal{S}}_{ij}\bigcup \{i\}$, for all $\mathcal{H}(x_{\mathcal{B}_{ijk}})$, $k\in\{1,\cdots,b_i\}$, $x_i$ is a vertex of $\mathcal{H}(x_{\mathcal{B}_{ijk}})$. Now, let 
\begin{align}\label{eq_defQ}
	\mathcal{Q}_{ij}=\bigcap_{k=1}^{b_i}\mathcal{H}(x_{\mathcal{B}_{ijk}}).
\end{align}
Obviously, $\mathcal{Q}_{ij}$ is a convex polytope, $x_i\in\mathcal{Q}_{ij}$, and $x_i$ is one of the vertices of $\mathcal{Q}_{ij}$. 
Furthermore, from Lemma \ref{L_Bij}, we know $\mathcal{Q}_{ij}\setminus \{x_i\}\neq \emptyset$ and thus, $\mathcal{Q}_{ij}$ has at least two vertices. 
Since in \eqref{eq_defQ}, $\mathcal{Q}_{ij}$ is obtained by intersecting $\mathcal{H}(x_{\mathcal{B}_{ijk}})$, using supporting hyperplane theorem~\cite{HR:13AG}, it can be derived that all vertices of $\mathcal{Q}_{ij}$ must lie on some of the faces of $\mathcal{H}(x_{\mathcal{B}_{ijk}})$, for $k\in\{1,\cdots,b_i\}$. Followed by this, let ${v}_Q$ denote a vertex of $\mathcal{Q}_{ij}$, and let $\mathcal{Y}_1,\mathcal{Y}_2,\cdots$ denote all the faces of polytope $\mathcal{H}(x_{\mathcal{B}_{ijk}})$, $k\in\{1,\cdots,b_i\}$ containing ${v}_Q$ such that
\begin{align}\label{eq_defvq}
	{v}_Q\in\bigcap_{\ell=1,2,\cdots}\mathcal{Y}_\ell.
\end{align}
From \eqref{eq_defvq}, if ${v}_Q\neq x_i$, there must exist an $\ell^*$ such that $x_i\notin\mathcal{Y}_{\ell^*}$.
This along with the fact that $x_i\notin \mathcal{H}(x_{\widehat{\mathcal{S}}_{ij}})$ means the $\mathcal{Y}_{\ell^*}$ must be spanned by vectors in $x_{\widehat{\mathcal{S}}_{ij}}$, where $\widehat{\mathcal{S}}_{ij}=\mathcal{S}_{ij}\setminus \{i\}$ (examples for such $\mathcal{Y}_{\ell^*}$ are the faces of the green-yellow intersected areas in Fig. \ref{Fig_Alg_1}). 
Consequently,  ${v}_Q\in\mathcal{H}(x_{\widehat{\mathcal{S}}_{ij}})$, and
${v}_Q\in\mathcal{Q}_{ij}\bigcap
\mathcal{H}(x_{\widehat{\mathcal{S}}_{ij}})\neq\emptyset$. This completes the proof. \hfill\qed

\medskip
\subsubsection{Proof of Lemma \ref{LM_ERW}b}
First, consider the case of $x_i\in\mathcal{H}(x_{\widehat{\mathcal{S}}_{ij}})$, i.e., the step 8 of Algorithm \ref{Algorithm_Main}. Since $\phi_{ij}=x_{i}$ and the agent $i$ itself is a normal agent, thus, $\phi_{ij}$ is a resilient convex combination and \eqref{RCCphi} holds. 

Second, for the case of $x_i\notin\mathcal{H}(x_{\widehat{\mathcal{S}}_{ij}})$, i.e., the step 11 of Algorithm \ref{Algorithm_Main}, one has 
$$\phi_{ij}\in\left(\bigcap_{k=1}^{b_i}\mathcal{H}(x_{\mathcal{B}_{ijk}})\right)\bigcap
\mathcal{H}(x_{\widehat{\mathcal{S}}_{ij}})\subset\bigcap_{k=1}^{b_i}\mathcal{H}(x_{\mathcal{B}_{ijk}}).$$
Recall that $\mathcal{B}_{ijk}\subset \mathcal{S}_{ij}$, $i \in \mathcal{B}_{ijk}$, and $|\mathcal{B}_{ijk}|=\sigma_i-{\kappa}_i$. Since the number of Byzantine neighbors of agent $i$ is bounded by ${\kappa}_i$, among all $k=1,\cdots,b_i$, there must exist one $\mathcal{B}_{ijk^{\star}}$  which consists only normal neighbors. Then given $\phi_{ij}\in\mathcal{H}(x_{\mathcal{B}_{ijk^{\star}}})$, one has $\phi_{ij}$ is a resilient convex combination and \eqref{RCCphi} holds.
This completes the proof. \hfill\qed

\medskip
\subsubsection{Proof of Lemma \ref{LM_ERW}c}
From steps 8 and 11 of Algorithm \ref{Algorithm_Main}, obviously, ${\phi}_{ij}\in\mathcal{H}(x_{\widehat{\mathcal{S}}_{ij}})$. Then based on Carath\'eodory's  theorem\cite{WJ86Caratheodory}, we know that there exists a subset $\Psi\subset \widehat{\mathcal{S}}_{ij}$ with $|\Psi|\le (n+1)$, such that ${\phi}_{ij}\in\mathcal{H}(x_{\Psi})$. That is,
\begin{align}\label{EQ_lm1}
	{\phi}_{ij}=\sum_{\ell\in\Psi}{\bar\beta}_{ij\ell}x_{\ell},\quad \sum_{\ell\in\Psi}{\bar\beta}_{ij\ell}=1,\quad {\bar\beta}_{ij\ell}\ge 0.
\end{align}
Then, since  $$\sum_{\ell\in\Psi}{\bar\beta}_{ij\ell}=1, \quad{\bar\beta}_{ij\ell}\ge 0 \quad \text{and} \quad |\Psi|\le (n+1),$$ 
there exists at least one $\ell^*\in\Psi$, such that ${\bar\beta}_{ij\ell^*}\ge\frac{1}{n+1}$.
Bringing this to equation \eqref{eq_weight}, for $l\in\widehat{\mathcal{S}}_{ij}$, if $\ell\in {\Psi}\subset \widehat{\mathcal{S}}_{ij} $, let ${\beta}_{ij\ell}={\bar\beta}_{ij\ell}$; if  $\ell\notin {\Psi}$
let ${\beta}_{ij\ell}=0$, otherwise. Then, it follows that  ${\beta}_{ij\ell^*}={\bar\beta}_{ij\ell^*}\ge\frac{1}{n+1}$. This completes the proof. \hfill\qed

\bibliographystyle{IEEEtran}

\bibliography{RES.bib}

\begin{IEEEbiography}
	[{\includegraphics[width=1in,height=1.25in,clip,keepaspectratio]{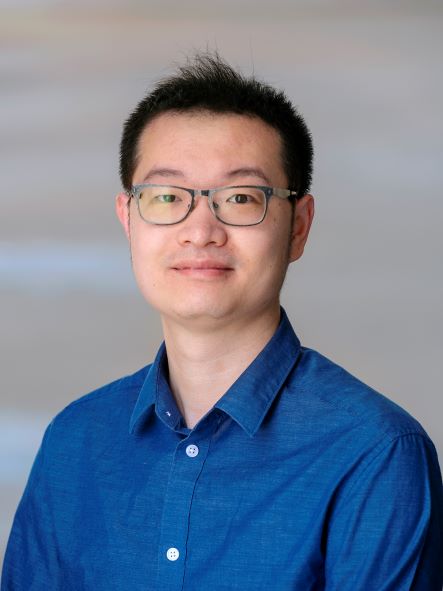}}] {Xuan Wang} 
	is an assistant professor with the Department of Electrical and Computer Engineering at George Mason University. 
	He received his Ph.D. degree in autonomy and control, from the School of Aeronautics and Astronautics, Purdue University in 2020. He was a post-doctoral researcher with the Department of Mechanical and Aerospace Engineering at the University of California, San Diego from 2020 to 2021. His research interests include multi-agent control and optimization; resilient multi-agent coordination; system identification and data-driven control of network dynamical systems.
\end{IEEEbiography}

\begin{IEEEbiography}
	[{\includegraphics[width=1in,height=1.25in,clip,keepaspectratio]{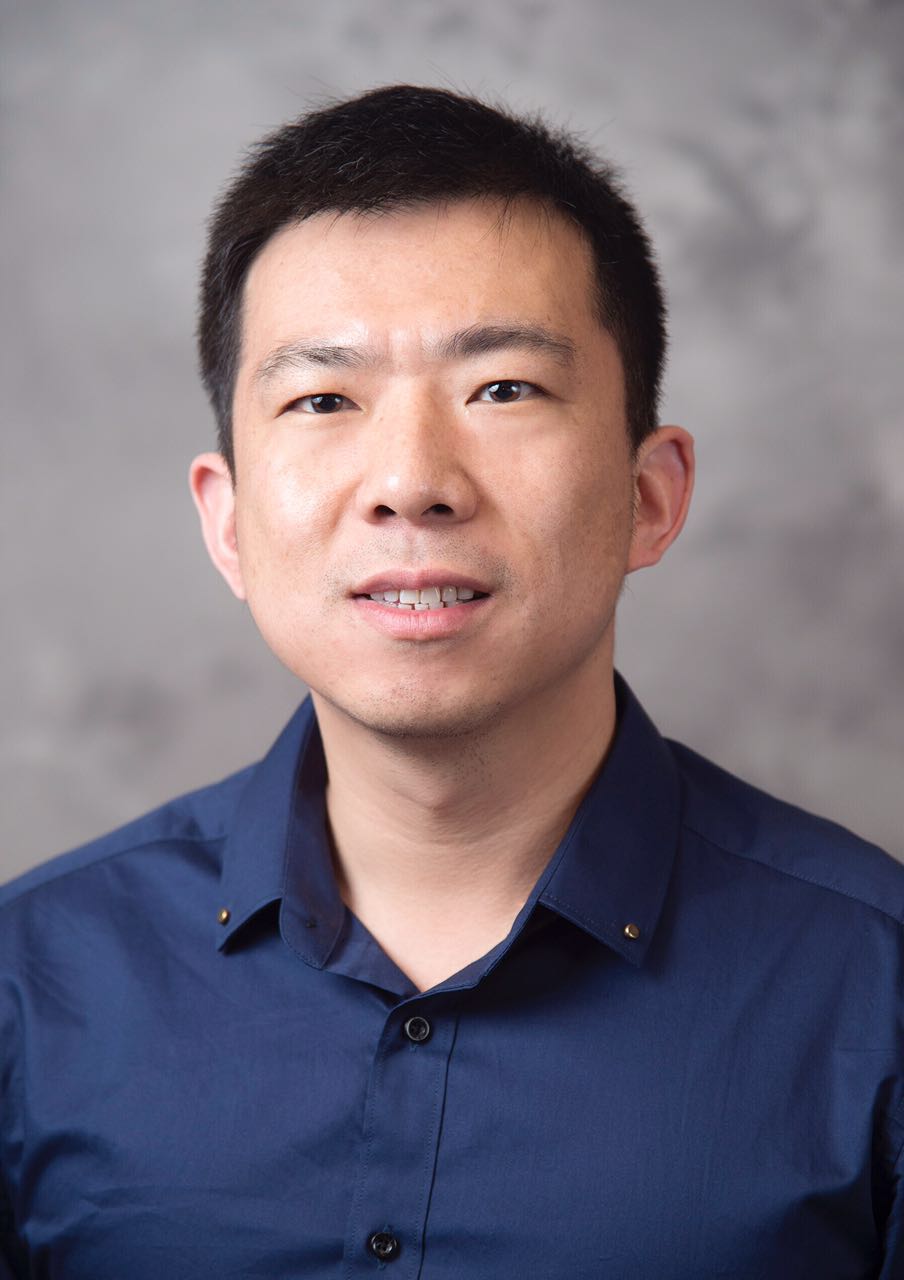}}] {Shaoshuai Mou} received a Ph. D. degree from Yale University in 2014. After working as a postdoctoral associate at MIT, he joined Purdue University as an assistant professor in the School of Aeronautics and Astronautics in 2015. His research interests include distributed algorithms for control/optimizations/learning, multi-agent networks, UAV collaborations, perception and autonomy, resilience and cyber-security. 
\end{IEEEbiography}

\begin{IEEEbiography}
	[{\includegraphics[width=1in,height=1.25in,clip,keepaspectratio]{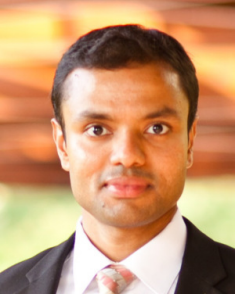}}] {Shreyas Sundaram}  is the Marie Gordon Professor in the Elmore Family School of Electrical and Computer Engineering at
	Purdue University. He received his PhD in Electrical Engineering from the University of Illinois at Urbana-Champaign in 2009. He was a Postdoctoral Researcher at the University of Pennsylvania from 2009 to 2010, and an Assistant Professor in the Department of Electrical and Computer Engineering at the University of Waterloo from 2010 to 2014. He is a recipient of the NSF CAREER award. At Purdue, he received the HKN Outstanding Professor Award, the Outstanding Mentor of Engineering Graduate Students Award, the Hesselberth Award for Teaching Excellence, and the Ruth and Joel Spira Outstanding Teacher Award. His research interests include network science, analysis of large-scale dynamical systems, fault-tolerant and secure control, linear system and estimation theory, and game theory. 
\end{IEEEbiography}

\end{document}